\documentclass{theoretics}

\usepackage{pgf}
\usepackage{multirow}
\usepackage{fancyvrb}
\usepackage{makecell}





\DeclareMathOperator*{\E}{\mathop{\mathbb{E}}}
\newcommand{\can}{ \textnormal{Can}}
\newcommand{\varlabel}{{\rm varlabel}}
\newcommand{\clauselabel}{{\rm clauselabel}}
\newcommand{\N}{\mathbb{N}}

\newcommand{\hkzzbasis}{1.306995}
\newcommand{\pfrac}[2]{\left(\frac{#1}{#2}\right)}
\newcommand{\var}{{\rm var}}
\newcommand{\ppsza}{Paturi, Pudl\'ak, Saks, and Zane}
\newcommand{\ppza}{Paturi, Pudl\'ak, and Zane}
\newcommand{\hkzz}{Hansen, Kaplan, Zamir, and Zwick}
\newcommand{\hsswa}{Hofmeister, Sch\"oning, Schuler, and Watanabe}
\newcommand{\KL}{\textnormal{KL}}

\renewcommand{\root}{\textnormal{root}}
\newcommand{\wcut}{\textnormal{wCut}}
\newcommand{\ppsz}{{\rm ppsz}}
\newcommand{\forced}{ {\rm Forced}}
\newcommand{\cut}{ {\rm Cut}}

\newcommand{\error}{\textnormal{ Error}}
\newcommand{\indeg}{{\rm indeg}}
\newcommand{\heavy}{\textsc{Heavy}}
\newcommand{\heavychild}{\textsc{HCB}}
\newcommand{\bonusheavy}{\textsc{HB}}  
\newcommand{\heavypenalty}{\textsc{HP}}
\newcommand{\setprivileged}{\textsc{Privileged}}
\newcommand{\privlgd}{\textsc{Priv}}
\newcommand{\parentM}{\textsc{ParentM}}
\newcommand{\damage}{\textsc{Damage}}
\newcommand{\benefit}{\textsc{Benefit}}

\newcommand{\benefitSpelledOut}{ \epsilon \gamma^2(r)  (1 - Q_r)^2 P_{r - \delta}^{k-3}}
\newcommand{\damageSpelledOut}{(k-1) (1-r) P_r^{k-2} \delta Q'_r }

\newcommand{\RestateRemark}[1]{{\normalfont\bfseries #1}}
\newcommand{\RestateInit}[1]{\newcommand{#1}{}}
\newcommand{\RestateGo}[1]{\renewcommand{#1}{(Restated)}}

\usepackage{thm-restate}
\SetKwProg{Proc}{procedure}{}{}

\addbibresource{scheder-refs.bib} 

\title{PPSZ is better than you think}
\ThCSauthor{Dominik
  Scheder}{dominik.scheder@gmail.com}[0000-0002-9360-7957]

\ThCSaffil{Hochschule Zittau/G\"orlitz, Germany}

\ThCSshortnames{D.\ Scheder}
\ThCSshorttitle{PPSZ is better than you think}

\ThCSyear{2024}
\ThCSarticlenum{5}
\ThCSdoi{10.46298/theoretics.24.5}
\ThCSreceived{Jul 25, 2022}
\ThCSrevised{Oct 19, 2023}
\ThCSaccepted{Feb 4, 2024}
\ThCSpublished{Mar 13, 2024}

\ThCSthanks{An extended abstract of this work has already 
been published~\cite{Scheder2021}, and a full version 
is publicly accessible at~\cite{Scheder2021ECCC}.}

\ThCSkeywords{Satisfiability, NP-complete problems, randomized
  algorithm}

\begin{document}
\maketitle

\begin{abstract}
   PPSZ, for long time the fastest  known algorithm for $k$-SAT, works by 
   going through the variables of the input formula in random order; each variable
   is then set randomly to $0$ or $1$, unless the correct value can be inferred
   by an efficiently implementable rule (like small-width resolution; or being implied by a 
   small set of clauses).
   
   We show that PPSZ performs exponentially better than previously known,
   for all $k \geq 3$. We achieve this through an improved 
   analysis and without any change to the algorithm itself. 
   The core idea is to pretend that PPSZ does not process the variables
   in uniformly random order, but according to a carefully designed distribution. We write
   ``pretend'' since this can be done while running the original 
   algorithm, which does use a uniformly random order.

\end{abstract}

 \section{Introduction}
 
 Satisfiability is a central problem in theoretical computer science. One is given a Boolean formula
 and asked to find a {\em satisfying assignment}, that is, setting the input variables to $0$ and $1$ to make
 the whole formula evaluate to $1$. Or rather, determine whether such an assignment exists.
 A particular case of interest is CNF-SAT, when the input formula
 is in conjunctive normal form---that is, the formula is an AND of clauses; a clause is an OR of literals;
 a literal is either a variable $x$ or its negation $\bar{x}$. If every clause contains at most $k$ literals,
 the formula is called a $k$-CNF formula and the decision problem is called $k$-SAT.
 
 Among worst-case algorithms for $k$-SAT, two paradigms dominate: local search algorithms like
 Sch\"oning's algorithm~\cite{Schoening1999} and random restriction algorithms like PPZ (\ppza{}~\cite{ppz}) 
 and PPSZ (\ppsza{}~\cite{ppsz}).
 Both have a string of subsequent improvements:
\hsswa{}~\cite{HSSW}, Baumer and Schuler~\cite{BaumerSchuler2003},
and Liu~\cite{Liu18} improve Sch\"oning's algorithm.  Hertli~\cite{Hertli2014} and \hkzz{}~\cite{HKZZ}
  improve upon PPSZ.
 
 For large $k$, both paradigms achieve a running time of the form $2^{ n (1 - c/k + o(1/k))}$, 
 where $c$ is specific
 to the algorithm ($c = 1$ for PPZ; $c = \log_2(e) \approx 1.44$ for Sch\"oning; 
 $c = \pi^2 / 6 \approx  1.64$ for PPSZ). Interestingly, the running time of completely different approaches like the polynomial method
 (Chan and Williams~\cite{ChanWilliams})  is also of this form. This gave rise to the 
 Super-Strong Exponential Time Hypothesis (Vyas and Williams~\cite{VyasWilliams}),
which  conjectures that the $c/k$ in the exponent is optimal; for instance,
that a running time of $2^{n ( 1- \log(k) / k)}$ is impossible.\\

This paper presents an improvement of PPSZ. However, it is not an improvement of the algorithm
but of its {\em analysis}. We show that the same algorithm performs exponentially
better than previously known. Informally, PPSZ works by going through the variables
in random order $\pi$; inspecting each variable $x$, it tosses an unbiased coin to determine which value
to assign, unless there is a set of at most $w$ clauses that implies a certain value for $x$.
Take $w=1$ and this is exactly PPZ; take $w = \omega(1)$ and this is PPSZ (the exact
rate by which $w$ grows turns out to be  immaterial for all currently known ways to analyze the 
algorithm). 
Our idea is to pretend that the ordering $\pi$ is not chosen uniformly but from a carefully
designed distribution~$D$. This increases the success probability of PPSZ by some ``bonus'',
which depends on $D$. It seems
surprising that this can be done without actually changing the algorithm, 
but it turns out to be just a straightforward manipulation, which we formally
explain below in (\ref{distorted-jensen}). There is a price to pay in terms of how much $D$ differs from the uniform distribution: 
the success probability  incurs  a penalty of $2^{ - \KL(D||U)}$, where $\KL(D||U)$ is the Kullback-Leibler
divergence from the  uniform distribution $U$  to $D$.
We focus on  Unique-$k$-SAT, where
the input formula has exactly one satisfying assignment. A ``lifting theorem'' by Steinberger and myself~\cite{SchederSteinberger} shows that improving PPSZ for Unique-$k$-SAT automatically 
yields a (smaller) improvement for general $k$-SAT problem (without changing the algorithm).

The idea of analyzing PPSZ assuming some non-uniform distribution $D$ on permutations and paying a price 
in terms of $\KL(D||U)$ is not new. It is explicit in \cite{SchederSteinberger} and implicit in 
\cite{Hertli2011} and  \cite{ppsz}. However, all previous applications use this to deal with the case that 
${\rm sat}(F)$, the set of satisfying assignments, contains multiple elements; furthermore, in
\cite{SchederSteinberger, Hertli2011, ppsz}, the distribution $D$ is defined solely in terms of 
${\rm sat}(F)$ and ignores the syntactic structure of $F$ itself. In particular,
in the special case that $F$ has a unique solution,  $D$ reverts to the uniform distribution.
This paper is the first work that exploits the structure of $F$ itself to define a distribution $D$ on permutations,
and uses this to prove a better success probability for the Unique-SAT case.

 \subsection{Analyzing PPSZ: permutations and forced variables}

 We will now formally describe the PPSZ algorithm. Let $F$ be a formula, $x$ a variable,
 and~$b \in \{0,1\}$. A  formula $F$ {\em implies} $(x = b)$ if every satisfying assignment of $F$
 sets $x$ to $b$. For example, $(x \vee y) \wedge (x \vee \bar{y})$ implies $(x = 1)$ but neither
 $(y = 0)$ nor $(y = 1)$. For an integer $w$, we say $F$ {\em $w$-implies } $(x=b)$ if there 
 is a set $G$ of at most $w$ clauses of $F$ such that $G$ implies $(x=b)$.
 
\paragraph{The PPSZ algorithm with strength parameter $w$.}
 Let $w = w(n)$ be some fixed, slowly growing function. Given a CNF formula $F$
 and a permutation $\pi$ of its variable set $V$, 
 we define $\ppsz(F,w,\pi)$ as follows: go through the variables
 $x_1,\dots,x_n$ in the order prescribed by $\pi$. In each step, when handling a variable $x$,
 check whether $(x=b)$ is $w$-implied by $F$ for some $b \in \{0,1\}$. If so, set $x$ to $b$ (i.e.,
 replace every occurrence of $x$ in $F$ by $b$). Otherwise, set $x$ randomly to $0$ or $1$,
 with probability $1/2$ each. We define $\ppsz(F,w)$ to first choose a uniformly
 random permutation $\pi$ and 
 then call $\ppsz(F,w,\pi)$.
\begin{algorithm}[h]
\caption{PPSZ with fixed permutation $\pi$}\label{algorithm-ppsz-fixed-pi}
\Proc{$\ppsz(F, w, \pi)$}{
   $V := $ the set of variables in $F$\;
   $\beta := $ the empty assignment on $V$\;
   \For{$x \in V$ in the order of $\pi$}{
     \If{there is $b \in \{0,1\}$ such $(x = b)$ is $w$-implied by $F$\label{line-if}}{
      $\beta(x) := b$ \label{line-infer}\;
     }
     \Else{
       $\beta(x) := $ a uniformly random bit in $\{0,1\}$ \label{line-else}\;
     }
      $F := F|_{x = \beta(x)}$\;
   }
   \If{$F$ has been satisfied}{
      \Return{$\beta$}
     }
   \Else{
    \Return{\texttt{failure}}
   }
}
\label{algo-ppsz-fixed-pi}
\end{algorithm} 

\begin{algorithm}[h]
\caption{PPSZ with random permutation $\pi$}\label{algorithm-ppsz-random-pi}
\Proc{$\ppsz(F, w)$} {
   $\pi := $ a random permutation of the variables $V$\;
   \Return{$\textsc{ppsz}(F, w, \pi)$}
}
\label{algo-ppsz-random-pi}
\end{algorithm}

 It should be noted that \ppsza{} in~\cite{ppsz} formulated a stronger version of PPSZ, which tries to 
 infer $(x=b)$ using  bounded-width resolution. 
 A close look at their proof shows that they never use properties of resolution beyond those
 already possessed by $w$-implication.
 We assume that $\alpha = (1,\dots,1)$ is the unique satisfying assignment of $F$. This is purely
 for notational convenience.
\begin{definition}
   Let $\pi$ be a permutation of $V$ and $x$ a variable.  Let $A \subseteq V$ be the set of variables
   coming before $x$ in $\pi$, and let $F' := F|_{A \mapsto 1}$ be the restricted  formula obtained
   from $F$   by setting every variable $y \in A$ to $1$. If $F'$ $w$-implies $(x = 1)$ then 
   we say $x$ is {\em forced under $\pi$} and write $\forced(x, \pi) = 1$; otherwise we say 
   $x$ is {\em guessed under $\pi$} and write $\forced(x,\pi) = 0$.
   Let $\forced(\pi) := \sum_{x \in V} \forced(x, \pi)$.
\end{definition}  

Of course, $\forced(x,\pi)$ depends on $F$ and $w$, as well, so to be 
formally correct, we should write $\forced(F,w,x,\pi)$. Since 
$F$ and $w$ are fixed throughout, we prefer the less
 formal notation $F(x,\pi)$.

\begin{observation}[\cite{ppsz}]
   Suppose we run PPSZ with a fixed permutation $\pi$. Then
   $\ppsz(F,w,\pi)$  succeeds, i.e., finds $\alpha$, with probability exactly $2^{ - n + \forced(\pi)}$.
\end{observation}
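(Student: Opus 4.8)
The plan is to track, step by step through the order $\pi$, the event that PPSZ has so far assigned every processed variable its correct value $1$, and to show that the conditional probability of maintaining this event is exactly $1$ at a forced variable and $1/2$ at a guessed one. Multiplying these conditional probabilities over all $n$ steps gives $2^{-(n-\forced(\pi))}$, and the argument closes by identifying the ``all-correct-so-far'' event after the last step with the event that $\ppsz(F,\pi)$ succeeds.

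Concretely, enumerate the variables as $y_1,\dots,y_n$ in the order given by $\pi$, and for $0\le i\le n$ let $E_i$ be the event that $\beta(y_1)=\dots=\beta(y_i)=1$ in a run of $\ppsz(F,\pi,w)$; the event $E_0$ is certain. First I would argue that, conditioned on $E_{i-1}$, the working formula $F$ at the moment $y_i$ is processed equals $F|_{A\mapsto 1}$, where $A=\{y_1,\dots,y_{i-1}\}$: the algorithm substitutes $\beta(y_j)$ for $y_j$ as it proceeds, on $E_{i-1}$ each such $\beta(y_j)$ equals $1$, and the order of substitutions is irrelevant. Hence the test on line~\ref{line-if} for $y_i$, conditioned on $E_{i-1}$, is precisely the test defining $\forced(y_i,\pi)$.

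Next I would split into the two cases. If $\forced(y_i,\pi)=1$, then $F|_{A\mapsto 1}$ $w$-implies $(y_i=1)$; it cannot also $w$-imply $(y_i=0)$, since $F|_{A\mapsto 1}$ is satisfied by the all-ones assignment (a restriction of $\alpha$), so line~\ref{line-infer} sets $\beta(y_i):=1$ deterministically, giving $\Pr[E_i\mid E_{i-1}]=1$. If $\forced(y_i,\pi)=0$, the same satisfiability remark shows $F|_{A\mapsto 1}$ $w$-implies neither value of $y_i$, so the algorithm reaches line~\ref{line-else} and assigns $\beta(y_i)$ a uniform bit, giving $\Pr[E_i\mid E_{i-1}]=1/2$. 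Multiplying, $\Pr[E_n]=\prod_{i=1}^{n}\Pr[E_i\mid E_{i-1}]=2^{-|\{i:\forced(y_i,\pi)=0\}|}=2^{-(n-\forced(\pi))}$.

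Finally I would observe that $E_n$ is exactly the event $\beta=\alpha$ (since $\alpha=(1,\dots,1)$), and that, because $\alpha$ is the unique satisfying assignment, $\beta=\alpha$ holds if and only if the fully substituted formula is satisfied, i.e.\ if and only if PPSZ returns an assignment rather than \texttt{failure}. Therefore $\Pr[\ppsz(F,\pi)\text{ succeeds}]=\Pr[E_n]=2^{-n+\forced(\pi)}$. The only point that needs care is the coupling in the second and third paragraphs --- that conditioning on ``correct so far'' makes the algorithm's local decision at $y_i$ agree with the static quantity $\forced(y_i,\pi)$, together with the remark that a satisfiable restricted formula can never $w$-imply the wrong value --- but this is bookkeeping of conditional probabilities rather than a genuine obstacle.
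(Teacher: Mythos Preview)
Your proof is correct and is the standard argument; the paper itself does not give a proof but simply states the observation and attributes it to~\cite{ppsz}. The one subtlety you correctly handle---that conditioned on $E_{i-1}$ the restricted formula cannot $w$-imply the wrong value $(y_i=0)$, since the all-ones assignment still satisfies it---is indeed the only point requiring care.
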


Taking $\pi$ to be  a random permutation we get
\begin{align}
\Pr[\ppsz(F) \textnormal{ succeeds}]  & = \E_{\pi} \left[ 2^{-n + \forced(\pi)}\right] 
\label{succ-prob-pre-jensen} \\
  & \geq  2^{ -n + \E_{\pi} [\forced(\pi)]} \ , \nonumber 
   \end{align}
which follows from Jensen's inequality applied to the convex function $t \mapsto 2^t$. We are now
in a much more comfortable position: $\E[\forced(\pi)] = \sum_{x} \Pr[\forced(x,\pi) = 1]$,
and we can analyze this probability for every variable individually. Indeed, this is what \ppsza{}~\cite{ppsz} did:
they showed that $\Pr[\forced(x, \pi)=1] \geq s_k - o(1)$ if $F$ is a $k$-CNF formula with exactly one satisfying assignment. 
 Here $s_k$ is a number defined by the following experiment: let $T_{k-1}^{\infty}$ be the complete rooted
 $(k-1)$-ary tree; pick $\pi: V(T_{k-1}^{\infty}) \rightarrow [0,1]$ at random and delete every node $u$ with 
 $\pi(u) < \pi(\root)$, together with all its 
 descendants. Let $\mathbf{T}$ be the resulting
 tree. Then
 \begin{align}
 \label{definition-sk}
 s_k := \Pr[\mathbf{T} \textnormal{ is finite}] \ .
 \end{align} 
 The $o(1)$-term converges to $0$ as $w$ tends to infinity; thus, the growth rate of $w$ only 
 influences how fast this $o(1)$ error term vanishes, but (as far as we know) does not materially
 influence the success probability of PPSZ.
  We conclude:

\begin{theorem}[\cite{ppsz}]
\label{theorem-unique}
   If $F$ is a $k$-CNF formula with a unique satisfying assignment, then \begin{align*}
   \Pr[\ppsz(F) \textnormal{ succeeds}] \geq
   2^{ -n + s_k n - o(n)} \ . 
   \end{align*}
   Furthermore, $s_k  = \frac{\pi^2}{6k} + o(1/k)$.   
\end{theorem}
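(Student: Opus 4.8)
The statement has a combinatorial half --- the bound $2^{-n+s_kn-o(n)}$ --- and an analytic half --- the value of $s_k$; I would treat them in turn, essentially reconstructing the analysis of \cite{ppsz}. For the combinatorial half, the inequality $\Pr[\ppsz(F)\text{ succeeds}]\ge 2^{-n+\E_\pi[\forced(\pi)]}$ is already recorded just after (\ref{succ-prob-pre-jensen}), and $\E_\pi[\forced(\pi)]=\sum_x\Pr_\pi[\forced(x,\pi)=1]$; so it suffices to prove $\Pr_\pi[\forced(x,\pi)=1]\ge s_k-o(1)$ for each variable $x$, with the error uniform in $x$. To that end I would build, following \cite{ppsz}, a \emph{critical clause tree} $T_x$ of depth $d=d(w):=\floor{\log_{k-1}w}$, which tends to infinity with $w$. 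Its root $\root$ is labelled $x$; to expand an internal node $u$ of depth $<d$ carrying label $y$, consider $F_u:=F|_{(L(u)\setminus\{y\})\mapsto1}$, where $L(u)$ is the set of labels on the path from $\root$ to $u$. Since uniqueness of the satisfying assignment $\alpha$ is inherited by restrictions consistent with $\alpha$, the variable $y$ still occurs in $F_u$ and has a critical clause $C_u=y\vee\bar z_1\vee\cdots\vee\bar z_m$ with $m\le k-1$; moreover no clause of $F_u$ mentions a variable of $L(u)\setminus\{y\}$, so $z_1,\dots,z_m\notin L(u)$. Attach to $u$ one child per $z_i$, and make depth-$d$ nodes leaves. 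Then $T_x$ has out-degree at most $k-1$, distinct labels along every root-to-leaf path, and only depth-$d$ leaves.

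Next, fix $\pi$ and let $A$ be the set of variables before $x$. I claim that if $A$ is a \emph{cut} of $T_x$ --- every root-to-leaf path contains a node whose label lies in $A$ --- then $F|_{A\mapsto1}$ $w$-implies $(x=1)$, whence $\forced(x,\pi)=1$. Indeed, let $S$ be the set of internal nodes none of whose proper ancestors carries a label in $A$; then $|S|\le(k-1)^d\le w$, and the clauses $C_u$ for $u\in S$ are clauses of $F$. If some assignment $\beta$ extending $A\mapsto1$ satisfied all of them yet set $x=0$, a falsified literal of $C_\root$ would yield a child $u_1$ with $\var(u_1)\notin A$ and $\beta(\var(u_1))=0$; this $u_1$ is internal (else $\root,u_1$ is an uncut root-to-leaf path) and hence lies in $S$; iterating produces a path in $S$ of length exceeding $d$, a contradiction. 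So it remains to bound $\Pr_\pi[A\text{ cuts }T_x]$. Representing $\pi$ by i.i.d.\ uniform labels $\pi(v)\in[0,1]$, ``$A$ cuts $T_x$'' says precisely that the nodes $u$ with $\pi(\var(u))<\pi(x)$ separate the depth-$d$ leaves of $T_x$ from $\root$ --- the event underlying (\ref{definition-sk}), up to three discrepancies: (i) $T_x$ has depth $d$ rather than being infinite, costing $s_k-s_k^{(d)}=o(1)$ as $d\to\infty$ (with $s_k^{(d)}=\int_0^1R_k^{(d)}(p)\,dp$, $R_k^{(d)}$ the $d$-th iterate of $t\mapsto(p+(1-p)t)^{k-1}$ from $0$, increasing to $R_k$); (ii) labels may repeat across branches, which by a monotonicity argument only raises the cut probability; and (iii) a uniform permutation on $n$ symbols is only approximately an i.i.d.\ labelling, but since $T_x$ has $O(w)=o(\sqrt n)$ nodes the discrepancy on the relevant coordinates is $o_n(1)$. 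As $d=d(w(n))\to\infty$, this gives $\Pr_\pi[A\text{ cuts }T_x]\ge s_k-o(1)$, finishing the combinatorial half.

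For the analytic half, conditioning the tree experiment on $\pi(\root)=p$ shows a child of the root survives with probability $1-p$, and the subtree below a surviving node is again a complete $(k-1)$-ary tree pruned at threshold $p$; with $R_k(p)$ the probability that such a subtree is finite, independence across the $k-1$ children gives $R_k(p)=\bigl(p+(1-p)R_k(p)\bigr)^{k-1}$, $R_k(p)$ being the smallest root in $[0,1]$, and $s_k=\int_0^1R_k(p)\,dp$. Substituting $p=1-\tfrac{c}{k-1}$ turns the equation into $R=\bigl(1-\tfrac{c(1-R)}{k-1}\bigr)^{k-1}$ and $s_k=\tfrac1{k-1}\int_0^{k-1}g_k(c)\,dc$ with $g_k(c)=R_k(1-\tfrac c{k-1})$. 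Letting $k\to\infty$, the equation becomes $R=e^{-c(1-R)}$, the extinction equation of a $\mathrm{Poisson}(c)$ branching process; its smallest root $g_\infty(c)$ equals $1$ for $c\le1$ and is the unique root in $[0,1)$ for $c>1$. Pointwise convergence $g_k\to g_\infty$ plus a $k$-uniform integrable dominating function (trivial for bounded $c$, exponentially small for large $c$ by standard branching-process tail bounds) gives $s_k=\tfrac1k\int_0^\infty g_\infty(c)\,dc+o(1/k)$. Splitting at $c=1$ and parametrising the range $c>1$ by $q:=1-g_\infty(c)\in(0,1)$, so that $c=-q^{-1}\ln(1-q)$, one computes
\[
\int_0^\infty g_\infty(c)\,dc=1+\int_0^1\Bigl(\frac1q+\frac{(1-q)\ln(1-q)}{q^2}\Bigr)dq=1+\sum_{m\ge1}\frac1{m^2(m+1)}=1+\bigl(\zeta(2)-1\bigr)=\frac{\pi^2}{6},
\]
using the power series of $\ln(1-q)$ and $\tfrac1{m^2(m+1)}=\tfrac1{m^2}-\tfrac1m+\tfrac1{m+1}$. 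Hence $s_k=\tfrac{\pi^2}{6k}+o(1/k)$.

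The main obstacle is twofold. In the combinatorial half it is the coupling hidden in discrepancies (ii)--(iii): showing that shared labels across branches only help and that the permutation-versus-i.i.d.\ correction is negligibly small; this is the genuinely subtle point of the PPSZ analysis, the critical-clause-tree construction itself being routine once uniqueness rules out degenerate nodes. In the analytic half it is making the $k\to\infty$ interchange rigorous --- pointwise convergence of $g_k$ near the critical value $c=1$ and an integrable, $k$-uniform domination for large $c$ --- after which the identity $\int_0^\infty g_\infty=\pi^2/6$ is a short but not entirely obvious evaluation.
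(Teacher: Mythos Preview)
The paper does not give its own proof of Theorem~\ref{theorem-unique}; it is stated as a known result of \ppsza{}~\cite{ppsz}, with only the sketch surrounding~(\ref{succ-prob-pre-jensen}) and~(\ref{definition-sk}) and the labeled-tree machinery of Section~\ref{section-cct} as background. Your reconstruction follows exactly that classical route, and both halves are essentially correct. Two remarks are in order.

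First, your critical clause tree construction via restrictions has a small but genuine gap. You define $C_u$ as a critical clause for $y$ in $F_u=F|_{(L(u)\setminus\{y\})\mapsto 1}$ and then assert ``the clauses $C_u$ for $u\in S$ are clauses of $F$''. They are not: $C_u$ lives in the restriction. The preimage $\tilde C_u\in F$ may contain extra negative literals $\bar v$ with $v\in L(u)\setminus\{y\}$, in particular $\bar x$ or $\bar{z_i}$ for an ancestor label $z_i$. In your descent argument you maintain $\beta(v)=0$ for every label $v$ on the current root-to-node path, so such a $\bar v$ would satisfy $\tilde C_u$ outright and block the induction. The standard fix (used in~\cite{ppsz} and implicitly in this paper's Section~\ref{section-cct}; see the ``assignment label'' $\alpha[y_{k-1}\mapsto 0]$ in the proof of Lemma~\ref{lemma-privileged-appendix}) is to pick $C_u\in F$ directly as any clause violated by $\alpha$ with all path labels flipped to $0$. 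Then every \emph{positive} literal of $C_u$ lies among the path labels---all of which $\beta$ has already set to $0$---and every \emph{negative} literal has a fresh variable, so the descent goes through. This is the construction the paper actually uses.

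Second, your discrepancy (iii) is a non-issue: sampling $\pi(v)\in[0,1]$ i.i.d.\ uniformly and sorting produces a uniformly random permutation \emph{exactly} (ties have measure zero), and the event $\forced(x,\pi)$ depends only on the induced order. No $o(\sqrt n)$ coupling is needed; this is precisely how the paper and~\cite{ppsz} model $\pi$. Your treatment of discrepancy (ii)---shared labels only help---is correct and matches the proposition in the appendix here (fresh labels cannot increase $\Pr[\cut_r]$).

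The analytic half is clean and correct; the substitution $q=1-g_\infty(c)$ and the evaluation $\int_0^\infty g_\infty=\pi^2/6$ via $\sum_{m\ge 1}\frac{1}{m^2(m+1)}=\zeta(2)-1$ is exactly the computation in~\cite{ppsz}.
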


 \subsection{Previous improvements}
 
 \paragraph{Multiple satisfying assignments.}
 The analysis of \ppsza{} runs into trouble if $F$ contains multiple satisfying assignments.
 In their original paper~\cite{ppsz} they presented a workaround; unfortunately, this is quite
 technical and, for $k=3,4$, exponentially worse than the  bound of 
 Theorem~\ref{theorem-unique}. It was a breakthrough when  Hertli~\cite{Hertli2011} gave a 
 very general analysis of PPSZ showing that the ``Unique-SAT bound'' also holds in the presence
 of multiple satisfying assignments. Curiously, his proof takes
 the result ``$\Pr[\forced(x, \pi)] = s_k -o(1)$'' more or less as a black box and does not ask how such 
 a statement would have been obtained. Steinberger and myself~\cite{SchederSteinberger} later simplified
 Hertli's proof and obtained a certain unique-to-general lifting theorem that is also important for this work:
 
 \begin{theorem}[Unique-to-General lifting theorem~\cite{ppsz}]
    If the success probability of PPSZ is at least $2^{-n + s_k n + \epsilon n}$ on $k$-CNF formulas
    with a unique satisfying assignment, for some $\epsilon > 0$, 
    then it is at least $2^{-n + s_k n + \epsilon' n}$ on 
    $k$-CNF formulas with multiple solutions, too, for some (smaller) $\epsilon' > 0$.
 \end{theorem}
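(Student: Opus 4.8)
The plan is to reduce, inside the analysis, a multiple-solution instance to a unique-solution instance ``on the fly'': once enough variables of $F$ have been revealed, the leftover formula is uniquely satisfiable, and the hypothesis applies to it verbatim. The starting point is the exact identity
\begin{align*}
\Pr[\ppsz(F)\textnormal{ succeeds}] \;=\; \sum_{\alpha\in{\rm sat}(F)} \E_\pi\!\left[2^{-n+\forced_\alpha(\pi)}\right],
\end{align*}
where $\forced_\alpha(x,\pi)=1$ iff $F|_{A\mapsto\alpha}$ $w$-implies $(x=\alpha(x))$, with $A$ the set of variables preceding $x$ in $\pi$, and $\forced_\alpha(\pi)=\sum_x\forced_\alpha(x,\pi)$: for a fixed $\pi$, the only coin history of $\ppsz(F,\pi)$ producing a given $\alpha\in{\rm sat}(F)$ guesses every guessed variable to $\alpha(x)$, it has probability $2^{-n+\forced_\alpha(\pi)}$, and distinct solutions produce disjoint histories. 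The obstacle is that a variable which is not frozen in $F$ can never be forced for any target $\alpha$, so each summand is only about $2^{-n+s_km}$ with $m$ the number of frozen variables; the summands must be made to pay collectively for the missing variables.

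For each pair $(\pi,\alpha)$ let $\tau=\tau(\pi,\alpha)$ be the first time $t$ at which $F_t:=F|_{(\text{first }t\text{ variables of }\pi)\mapsto\alpha}$ has a unique satisfying assignment (and $\tau=n$ if this never occurs). The steps $\tau+1,\dots,n$ of the run of $\ppsz(F,\pi)$ aimed at $\alpha$ depend only on $F_\tau$ and the induced order, so, conditionally on the prefix, the tail is exactly a run of $\ppsz(F_\tau)$ on a uniformly random order of the remaining $n-\tau$ variables; since $F_\tau$ is a uniquely satisfiable $k$-CNF formula, the hypothesis gives
\begin{align*}
\Pr[\ppsz(F_\tau)\textnormal{ succeeds}] \;\ge\; 2^{-(n-\tau)+s_k(n-\tau)+\epsilon(n-\tau)}
\end{align*}
(short tails cost only a constant factor, harmlessly absorbed). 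On the prefix, each variable in a position $\le\tau$ that is already frozen in $F$ stays frozen in every $F_t$, $t<\tau$, so the critical-clause-tree bound of~\cite{ppsz} still applies and it is forced with probability $\ge s_k-o(1)$, conditionally on the variables revealed so far. Inserting the tail bound into the identity after conditioning each summand on its prefix, and summing over $\alpha$, these estimates combine to
\begin{align*}
\Pr[\ppsz(F)\textnormal{ succeeds}] \;\ge\; 2^{-n+s_kn+\epsilon n}\cdot\Phi, \qquad \Phi:=\sum_{\alpha\in{\rm sat}(F)}\E_\pi\!\left[2^{\forced_\alpha^{[1,\tau]}(\pi)-(s_k+\epsilon)\,\tau(\pi,\alpha)}\right],
\end{align*}
with $\forced_\alpha^{[1,\tau]}(\pi)$ the number of forced variables among the first $\tau$.

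Everything then reduces to showing $\Phi\ge 2^{-cn}$ for a constant $c<\epsilon$, which lets us take $\epsilon':=\epsilon-c-o(1)>0$; this is the crux. Morally it is a refund argument: each variable met in the prefix while not yet frozen costs a factor $2$ and contributes $0$ to $\forced_\alpha^{[1,\tau]}$, but it is also a branching point that doubles the number of still-reachable solutions, so the multiplicity $|{\rm sat}(F)|$ inside the sum $\Phi$ pays the loss back --- and because $s_k<1$ there is even a surplus $2^{(1-s_k)\cdot(\text{length of the non-frozen part of the prefix})}$ to spare. Concretely, when ${\rm sat}(F)$ is ``concentrated'', so that few variables need be revealed before $F$ becomes unique, one biases $\pi$ to reveal exactly those variables first --- the front-loading device already used in~\cite{ppsz} --- at a Kullback--Leibler price of $o(n)$, whence $\Phi=2^{-o(n)}$; when ${\rm sat}(F)$ is ``spread out'' (so $\tau$ is unavoidably linear), $|{\rm sat}(F)|$ is correspondingly large and an essentially uniform $\pi$ already gives $\Phi\ge 2^{(1-s_k)\Omega(n)-\epsilon n}\ge 2^{-cn}$.

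The hard part --- and the actual content of Hertli's analysis~\cite{Hertli2011} and its streamlining in~\cite{SchederSteinberger} --- is making this dichotomy uniform over arbitrary solution sets: controlling the joint law of $\tau$ and of the number of non-frozen variables seen before time $\tau$, handling the coupling of $\tau$ with $\forced_\alpha^{[1,\tau]}$ when one finally invokes the~\cite{ppsz} prefix bound (including the subtlety that that bound must be applied conditionally on the event of lying in the prefix, an event defined through $\tau$), exhibiting one distribution $D$ on pairs $(\pi,\alpha)$ whose $\KL$ price stays strictly below $\epsilon n$, and verifying that with this $D$ the refund beats the loss by a positive margin. I expect this step, rather than any of the preceding reductions, to be where essentially all the work goes; the remaining errors --- the $o(n)$ from~\cite{ppsz} and from short tails --- depend only on $k$ and the growth of $w$, so they are uniform and harmless.
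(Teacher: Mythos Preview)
The paper does not prove this theorem. It is quoted as a black-box result from~\cite{SchederSteinberger} (the text immediately preceding the theorem attributes it to ``Steinberger and myself~\cite{SchederSteinberger}''; the \verb|\cite{ppsz}| in the theorem header appears to be a typo). There is therefore no proof in the paper to compare your attempt against.

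As for your sketch on its own merits: the opening identity and the stopping-time decomposition into a prefix (before the restricted formula becomes uniquely satisfiable) and a tail (after) are sound, and the ``refund'' intuition---that each non-frozen variable seen in the prefix costs a factor $2$ but also doubles the number of reachable solutions, leaving a surplus of $2^{1-s_k}$ per such variable---is the right moral picture. But you yourself flag the gap: the dichotomy ``concentrated vs.\ spread-out solution set'' is not a proof, and controlling the joint distribution of $\tau$, the non-frozen count, and $\forced_\alpha^{[1,\tau]}$ under a single distribution $D$ with $\KL(D\Vert U)<\epsilon n$ is precisely the content of~\cite{Hertli2011,SchederSteinberger}. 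In particular, your claim that frozen variables in the prefix are forced with probability $\ge s_k-o(1)$ \emph{conditionally on lying in the prefix} is not justified---the conditioning event depends on $\pi$ in a complicated way and can bias the placement of the variable. So what you have written is a plausible road map, not a proof; and since the paper itself only cites the result, that is arguably all that is called for here, but you should be explicit that you are sketching the shape of an argument whose execution you are outsourcing to the references.
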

   
   \paragraph{Improved algorithms.}
Concerning the Unique-SAT case, Hertli~\cite{Hertli2014} designed an algorithm that is a variant of PPSZ
and achieves a success probability of $2^{ -n + s_3 n + \epsilon \, n}$ for $3$-CNF formulas with a unique 
satisfying assignment. 
Unfortunately,
the concrete value of $\epsilon$ is so tiny that Hertli did not even bother to determine it, and his approach is extremely specific to $3$-SAT, with no clear path how to 
generalize it to $k$-SAT. A result by Qin and 
Watanabe~\cite{QinWatanabe} strengthened Hertli's
result somewhat. More recently, \hkzz{}~\cite{HKZZ} published an 
algorithm called {\em biased-PPSZ}, a version of PPSZ
in which some guessed variables are decided by a biased coin; which variables and how biased, that
 depends on the structure 
of the underlying formula. In contrast to Hertli's, their improvement is ``visible'': for $3$-SAT, it improves the 
success probability from $1.3070319^{-n}$ in Theorem~\ref{theorem-unique} to
$\hkzzbasis^{-n}$. Also, it works for all $k$ (although the authors do not work out the exact magnitude of 
the improvement).

\paragraph{Lower bounds.}

Chen, Tang, Talebanfard, and myself~\cite{CSTT} have shown that there are instances on which 
PPSZ has exponentially small success probability. Just {\em how exponentially small} has been tightened by
Pudl\'ak, Talebanfard, and myself~\cite{PST}: we now 
know that PPSZ has success probability at most $2^{ - (1 - 2/ k - o(1/k)) \cdot n}$ on certain instances, provided our parameter $w$ is not too large; 
for {\em strong PPSZ}, i.e., PPSZ
using small-width resolution instead of $w$-implication, 
then the same bound holds, provided your width bound is really small,
like $c \cdot \sqrt{ \log \log n}$~\cite{SchederTalebanfard}.

\subsection{Our contribution}

We show that the success probability of PPSZ on $k$-CNF formulas
is exponentially larger than $2^{-n + s_k n}$. In particular,
\begin{theorem}[Improvement for all $k$]
\label{theorem-general}
   For every $k \geq 3$ there is $\epsilon_k > 0$ such that
   the success probability of PPSZ on satisfiable $k$-CNF formulas 
   is at least $2^{-n (1 - s_k - \epsilon_k)}$.
\end{theorem}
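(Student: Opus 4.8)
The plan has two preliminary reductions and then one substantial construction.

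\textbf{Reductions.} First, by the Unique-to-General lifting theorem it suffices to handle the case that $F$ has a \emph{unique} satisfying assignment, which we take to be $(1,\dots,1)$: it is enough to produce $\epsilon_k>0$ with $\Pr[\ppsz(F)\text{ succeeds}]\ge 2^{-n+s_k n+\epsilon_k n - o(n)}$ for all such $F$. Second, instead of applying Jensen's inequality to the uniform permutation as in~\eqref{succ-prob-pre-jensen}, I would change measure to an arbitrary distribution $D$ on permutations and only then apply Jensen to $t\mapsto 2^t$:
\[
\Pr[\ppsz(F)\text{ succeeds}]=\E_{\pi\sim U}\!\left[2^{-n+\forced(\pi)}\right]=\E_{\pi\sim D}\!\left[\frac{U(\pi)}{D(\pi)}\,2^{-n+\forced(\pi)}\right]\ \ge\ 2^{\,-n+\E_{\pi\sim D}[\forced(\pi)]-\KL(D\|U)}.
\]
Thus everything comes down to designing, for each $F$, a distribution $D=D(F)$ with $\E_{\pi\sim D}[\forced(\pi)]-\KL(D\|U)\ge s_k n+\epsilon_k n - o(n)$; the $o(n)$ is absorbed exactly as in Theorem~\ref{theorem-unique} by letting $w$ grow slowly.

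\textbf{The distorted distribution.} I would take $D$ to draw an independent label $\pi(x)\in[0,1]$ for each variable, uniform for most variables but from a slightly tilted density for a carefully chosen set $B$ of positive density $\delta$, and then process the variables in increasing label order. Since $|B|=\delta n$ and each tilt is an $O(1)$ perturbation of the uniform density, $\KL(D\|U)$ is at most $\delta n$ times an explicit small constant. For the gain, write $\E_{\pi\sim D}[\forced(\pi)]=\sum_x\Pr_D[\forced(x,\pi)=1]$; each summand is controlled, as in the uniform analysis, by the critical clause tree of $x$ and the labels on its nodes, where the bound $s_k-o(1)$ arises from dominating the pruning at $x$ by the idealized $(k-1)$-ary process defining $s_k$ in~\eqref{definition-sk}. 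Tilting the labels in $B$ toward values that help this pruning succeed yields, for the variables whose critical clause trees touch $B$, a \emph{bonus} of order $\delta$ per such variable; the bonus is assembled from two pieces that both have to be tracked: a local gain collected when a critical clause of $x$ has all of its other variables ordered before $x$, and a ``matching'' gain harvested from overlaps among distinct critical clause trees. The choice of $B$ and of the tilt parameters is made so that the total bonus exceeds $\KL(D\|U)$ by a linear term, which is the desired $\epsilon_k n$.

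\textbf{The main obstacle.} Tilting a label in $B$ is not free for the neighbours: it also moves the forcing probabilities of the other variables whose critical clause trees meet a member of $B$, and this \emph{damage} threatens to cancel the bonus, so controlling it is the crux. One needs, first, a structural dichotomy guaranteeing that $B$ of positive density always exists, or else that the instance already contains a positive density of variables with an oversized (``heavy'') child subtree, which a direct calculation shows are forced with probability strictly above $s_k$ already. Second, for the variables that absorb damage one has to perform a case analysis of the local configuration around each biased node; the hard case is a node sitting in two overlapping critical-clause structures, where the damage to the node's descendants, to the node itself, and to the parts of the trees that play no further role must each be bounded separately and shown to sum to less than the bonus. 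Closing this bookkeeping — which, for fixed $k$, reduces to a finite set of explicit numerical inequalities in the density $\delta$, the tilt strength, and the maximum bias — completes the argument; essentially all the technical difficulty is concentrated in this damage-control step, while the reductions and the remaining estimates are routine.
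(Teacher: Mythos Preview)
Your reductions are exactly right and match the paper: the Unique-to-General lifting theorem reduces to Unique-$k$-SAT, and the change-of-measure Jensen step~\eqref{distorted-jensen} is precisely the engine the paper uses. Your identification of a structural dichotomy and of ``damage control'' as the crux is also on target.

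The gap is in your construction of $D$. You commit to labels that are drawn \emph{independently}, with tilted marginals on a set $B$. This is the paper's tool for the \emph{highly irregular} case (many variables $y$ whose negative literal $\bar y$ occurs in disproportionately many critical clauses): pulling such a $y$ earlier helps many other variables and hurts only $y$, so the linear-in-$\epsilon$ benefit beats the $O(\epsilon^2)$ KL cost. But in the \emph{almost regular} case---every variable has one critical clause and every $\bar y$ appears in roughly $k-1$ of them---there is no set $B$ on which independent marginal tilting yields a positive first-order balance, and your dichotomy (``$B$ exists, or else many variables already have forcing probability above $s_k$'') does not hold as stated. The regular case is exactly the one where neither branch fires.

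The paper handles the regular case with a genuinely different device that your proposal does not contain: it keeps every marginal \emph{uniform} and instead introduces \emph{positive correlation} between the labels $\pi(y),\pi(z)$ for pairs $\{y,z\}$ in a matching $M$ built from critical clauses $(x\vee\bar y\vee\bar z)$. Because the marginals are unchanged, there is no first-order damage to any single variable; the correlation makes the event ``both $y$ and $z$ precede $x$'' strictly more likely, producing a bonus for $x$, while the KL cost of the joint density is only quadratic in the correlation strength. The adverse effects of this correlation can be bounded, and when they are large one shows that $\E_U[\forced(\pi)]$ was already above $s_k n$. Your mention of a ``matching gain harvested from overlaps'' is too vague to be read as this construction, and your explicit stipulation that labels are independent rules it out. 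To close the argument you need to replace independent tilting in the regular branch by a correlated distribution of this kind.
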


\paragraph{Comparison to Hansen et al.}
As already mentioned, a paper by \hkzz{}~\cite{HKZZ} introduces 
the algo\-rithm biased-PPSZ, which also exhibits an exponentially
improved running time. Which approach is better, ours or theirs? 
Numerically, neither paper bothers to analyze the asymptotic behavior
of the improvement $\epsilon_k$ as $k$ grows. Conceptually, one might 
argue that our result subsumes theirs, because they actually have to define
a new algorithm, while we simply give a better analysis of the old one. 
Methodologically, the two approaches are somewhat orthogonal: 
Hansen et al.~choose the Boolean values of 
the variables in a non-uniform 
way but leave the permutation of the variables
uniform; we change the permutation (more appropriately, we ``pretend'' 
to change it, because we don't change the algorithm) but choose 
the Boolean values uniformly. The fact that we do not change 
the algorithm seems 
like a limitation but actually gives us greater freedom: we can exploit information gleaned from the formula, 
even if that information is by itself NP-hard to compute.
I suspect that one can combine two approaches and get 
improved numbers for small $k$, like $k=3$; however, I fear 
that doing so would be extremely tedious and barely offer any additional 
insight. 

Personally, I think it would be more fruitful to focus on both approaches 
individually and explore how far each can be pushed because they, 
in the words of~\cite{HKZZ}, ``only scratch the surface''. 
Unfortunately, both currently suffer from the same shortcoming:
they do not improve 
the asymptotic $\frac{\pi^2}{6}$-factor in the behavior of the savings 
$s_k$ for large $k$. To be more precise, 
the improvement $\epsilon_k$ shrinks like $o\pfrac{1}{k}$ for 
both Hansen et al.~and this paper and thus becomes negligible 
compared to $s_k = \frac{\pi^2}{6} \cdot \frac{1}{k} + o\pfrac{1}{k}$.

\subsection{The case \texorpdfstring{$k=3$}{k=3}}

The case $k=3$ is the most visible and exhibits the fiercest competition. 
The full version of Hansen et al.~and 
the ECCC version of this result~\cite{Scheder2021ECCC}
invest considerable energy to hammer out a concrete numerical 
result how much they can improve over $s_3$. And although the $k=3$ part of \cite{Scheder2021ECCC} follows roughly the same approach as the 
general-$k$ case in this paper, it introduces several new concepts 
and methods that are not needed for Theorem~\ref{theorem-general}. 
Furthermore, it is highly technical, and the set of people 
interested in it is most likely a clear subset of those interested 
in the general-$k$ case. Finally, the analysis for $k=3$ in~\cite{Scheder2021ECCC} does not hit any natural wall, and therefore 
a simple tightening of inequalities and a better choice 
of constants and functions would already yield a better bound. 
We therefore decided not to include 
the $k=3$ part in this paper. 

\subsection{Organization of the paper}

We outline our general idea, analyzing PPSZ under some non-uniform
distribution on permutations, in Section~\ref{section-overview}.
Section~\ref{section-cct} introduces the notions of critical clause trees
and ``cuts'' in those trees. This is mainly a review of critical clause trees 
as defined in~\cite{ppsz}; however, since we will manipulate these
trees extensively, we introduce a more abstract and robust version, called
``labeled trees'' and cuts therein.
Section~\ref{section-general-k} contains the proof of 
Theorem~\ref{theorem-general}, our main result.
We strive for succinctness above all else and
took no effort to optimize the magnitude of our improvement.

\section{Brief overview of our method}
\label{section-overview}

\subsection{Working with a make-belief distribution on permutations}

Our starting point is to take a closer look at the application of Jensen's  inequality:
\begin{align*}
 \E_{\pi} \left[ 2^{-n + \forced(\pi)}\right]   & \geq  2^{ -n + \E_{\pi} [\forced(\pi)]} \ .
\end{align*}
This would be tight if $X := \forced(\pi)$ was the same for every permutation $\pi$. But maybe
certain permutations are ``better'' than others. The idea is to define a new distribution $D$
on permutations, different from the uniform distribution, under which ``good''
permutations have larger probability, thus $\E_{\pi \sim D} [X]  > \E_{\pi \sim U} [X]$.
Sadly, we have no control over the distribution of permutations: firstly, we promised not to change
the algorithm; secondly, and more importantly, defining $D$ will require  some information that is itself
NP-hard to come by. There is a little trick dealing with this. Generally speaking, 
if we want to bound the expression $\E_{Q} \left[2^X \right]$ from below but
the obvious bound from Jensen's inequality, $2^{\E_{Q}[X]}$, is not good enough for our purposes,
we can replace $Q$ by our favorite $P$ but have to pay a price. Formally:
\begin{align}
    \E_{Q} \left[2^X \right] & = \sum_{ \omega \in \Omega}  Q(\omega) 2^{X(\omega)}
     = \sum_{\omega \in \Omega} P(\omega) \cdot \frac{Q(\omega)}{P(\omega)} 2^{X(\omega)} \nonumber \\
   & = \E_{\omega \sim P}\left[ 2^{X(\omega) - \log_2 \frac{ P(\omega)}{Q(\omega)}}\right]\nonumber \\
   & \geq 2^{   \E_{P} [X] - \E_{\omega \sim P}\left[ \log_2 \frac{P(\omega)}{Q(\omega)}\right]} \nonumber \\
    & = 2^{ \E_{P}[X] - \KL(P || Q)} \ .
    \label{distorted-jensen}
\end{align}
The term $\KL(P || Q) := \sum_{\omega} P(\omega) \log_2 \pfrac{ P(\omega)}{Q(\omega)}$ is known as the 
Kullback-Leibler divergence from~$Q$ to $P$. If $Q$ and $P$ are continuous distributions 
(over $\Omega = [0,1]^n$, for example) 
with density functions $f_Q$ and $f_P$, then (\ref{distorted-jensen}) still holds, for 
$\KL(P || Q) := \int_{\Omega} f_P(\omega) \log_2 \pfrac{f_P(\omega)}{f_Q(\omega)}$.
This trick is not new: it plays a crucial rule in~\cite{SchederSteinberger}, and, if you look
close enough, also in Hertli~\cite{Hertli2011}; it appears, in simpler form, already in~\cite{ppsz}. 
However, in~\cite{SchederSteinberger, Hertli2011, ppsz},
the distribution $P$ is defined only to make ``liquid variables'' 
(variables $x$ for which $F|_{x = 0}$ and $F|_{x=1}$ are both satisfiable) 
come earlier in~$\pi$ and do not take the syntactic structure of $F$ into account: they define $P$ 
purely in terms of ${\rm sat}(F)$, the space of solutions, whereas our $P$ will depend heavily on the structure on $F$
as a $k$-CNF formula. Our work is the first to apply this method to improving PPSZ on formulas
with a unique satisfying assignment.
\\

\subsection{Good make-belief distributions for PPSZ---a rough sketch} 
\label{rough-sketch}

How can we apply this idea to the analysis of PPSZ? 
The challenge is to find a distribution $D$ under which $\E_{\pi \sim D}[\forced(\pi)]$ is 
larger than under the uniform distribution. Since we assume that $F$ has the unique satisfying
assignment $\alpha = (1,\dots,1)$, we can find, for every variable~$x$, a {\em critical clause}
of the form $(x \vee \bar{y} \vee \bar{z})$.\footnote{Our informal outline assumes $k=3$ to keep notation
simple.} Critical clauses play a crucial role in~\cite{ppsz}
and~\cite{HKZZ} as well. Imagine we change the distribution on permutations such that
$y$ tends to come a bit earlier than under the uniform distribution. 
It is easy to see that this can only decrease $\E[ \forced(y,\pi)]$ (which is bad)
and only increase $\E[\forced(a, \pi)]$ for all other variables. In particular,
it usually increases $\E[\forced(x,\pi)]$ (which is good). Now assume the literal $\bar{y}$ appears in a 
disproportionally large number of critical clauses. Then the beneficial effect
of pulling $y$ to the front of $\pi$ outweighs its adverse effect. Thus,
if there is a set $V' \subseteq V$ of variables with $|V'| = \Omega(n)$, and
each $y \in V'$  appears in a large number of critical clauses,
we can define a new distribution $P$ on permutations~$\pi$ under which 
variables $V'$ tend to come earlier than under the uniform distribution. 
Using~$P$ as our make-belief distribution in (\ref{distorted-jensen}), we obtain
a success probability that is exponentially larger than the baseline. 
This is what we call the ``highly irregular case'' below. 

The other extreme would the ``almost regular case'', namely that almost every variable~$x$ has exactly one critical clause and that almost every
 negative literal $\bar{y}$ appears in 
exactly two critical clauses. In this case, we find a matching $M$, i.e., a set of disjoint pairs 
of variables such that $\{y,z\} \in M$ implies that $(x \vee \bar{y} \vee \bar{z})$ is a 
critical clause of $F$, for some variable $x$. We then adapt the distribution on permutations
such that the location of $y$ and $z$ is positively correlated---either they both tend to come
late or they both tend to come early. This will have both (easily quantifiable) beneficial
effects and (more difficult to quantify) adverse effects. However, we will see that
the adverse effects can only be large if $\E_{\pi \sim U}[\forced(\pi)]$ is already
larger than $s_k n$ under the uniform distribution.

 \section{Critical clause trees, labeled trees, and cuts}
\label{section-cct}

 This section introduces the key notions of 
 critical clause trees and cuts, which were already defined 
 in~\cite{ppsz}. We introduce a more general notion that we call
 {\em labeled trees} and of 
 {\em cuts} therein. We will perform extensive ``tree surgery'', and this
 new terminology will allow us to state our results in a concise, rigorous,  
 and readable fashion.

   \subsection{The Critical Clause Tree}
\label{subsection-ccct}

	All notions and results in this subsection already appear in~\cite{ppsz}, 
	although we might phrase certain things a little different.  
   We assume that $\alpha = (1,\dots,1)$ is the unique satisfying assignment of our 
   input $k$-CNF formula $F$.
   That means that for every variable $x$, we can find a clause of the form 
   $(x \vee \bar{y}_2 \vee \dots \vee \bar{y}_k)$. This is called a {\em critical clause of $x$}. 
   For an integer $h \in \mathbb{N}$, a  
   {\em critical clause tree of $x$ of height $h$}  is a rooted tree $T_x$ of height 
   at most $h$ with a bunch of additional information:
   every node $u$ of $T_x$ has a {\em variable label} $\varlabel(u)$;
   if the depth of $u$ is less than $h$, it has a {\em clause label} $\clauselabel(u)$.
   The tree is constructed as follows:
   \begin{itemize}
       \item Initialize $T_x$ as consisting of a single root node, and set $\varlabel(\root) = x$.
       \item While some node $u$ of $T_x$ of depth less than $h$ does not have a clause label yet:
       \begin{enumerate}
       \item Let $\alpha_u$ be the assignment arising from $\alpha$ by setting to $0$ all the variables $y$ that appear
       as variable labels on the path from the root to $u$ (including both the 
       label of the root, which is $x$, and the label of $u$). Let $a :=\varlabel(u)$.
       In particular, $\alpha_u(a) = 0$.
       \label{cct-local-assignment}
       \item Pick a clause $C$ that is violated by $\alpha_u$
        (this exists since $\alpha$ is the unique satisfying assignment),
       and set $\clauselabel(u) := C$.
       \label{cct-choose-clause}
       \item For each negative literal $\bar{z} \in C$, create a new child of $u$ and give it variable label $z$.
       Note that $u$ has at most $k-1$ children.
       \label{point-add-child}
       \end{enumerate}
   \end{itemize}
   The  construction depends on the choice of $h$, so we should write 
   $T_x^{(h)}$ instead of $T_x$; however, the number $h$ will be
   fixed throughout, so we simply write $T_x$ for brevity.\footnote{Indeed, 
   strictly speaking, $T_x$  also 
   depends on the formula $F$ and the satisfying assignment $\alpha$, so 
   one should write $T_x^{(F, \alpha, h)}$. However, since $F$, $\alpha$, and $h$ always
   refer to the same formula, assignment, and integer parameter, there is no 
   gain in explicitly listing this dependency.
   }
   This tree is central to the analysis in~\cite{ppsz} and also~\cite{HKZZ} (but 
   curiously is completely absent in~\cite{Hertli2011} and~\cite{SchederSteinberger}). The {\em depth} of a node~$u$ in a tree $T$
   is the length of the path from the root to $u$; we abbreviate it as $d_T(u)$ or simply $d(u)$ if $T$ is understood.
   $T_x$ is a $(k-1)$-ary tree: every node has at most $k-1$ children,
   as Step~\ref{point-add-child} creates a child for every {\em negative} literal
   in $C$; since $\alpha = (1,\dots,1)$ satisfies $C$, there are at most $k-1$
   negative literals in $C$.       
   \begin{observation}[\cite{ppsz}]
      \label{obs-labels-antichain}
      Suppose $C = (y_1 \vee \dots \vee y_i \vee \bar{z}_1 \vee \dots \vee \bar{z}_j)$ 
         is the clause label of a node~$u$. Then
      \begin{enumerate}
         \item each variable among $y_1,\dots,y_i$ appears as the clause label of some
         (not necessarily proper) ancestor of $u$;
         \label{item-ancestor-labels}
         \item $u$ has $j$ children (note that $j =0$ might happen)
         whose variable labels are $z_1,\dots, z_j$.
        \label{item-children-labels}
         \item If $v$ is a proper descendant of $u$ in $T_x$ 
         then $\varlabel(u) \ne \varlabel(v)$.
         \label{item-antichain}
      \end{enumerate}
   \end{observation}
   \begin{proof}
   		Point 1 and 2 follow immediately from the construction process.
		To show Point 3, 
   	   let $a := \varlabel(u)$ and $b := \varlabel(v)$. We have to show that
	   $a \ne b$. Let
        $w$ be the parent of $v$, so~$w$ is a (not necessarily proper) 
       descendant of $u$, and let $C = \clauselabel(w)$. By 
       construction, in particular Point~\ref{point-add-child} above, 
       $C$ contains the negative literal $\bar{b}$.
       Since $\alpha_w$ violates $C$, by choice of $C$ in Point~\ref{cct-choose-clause},
       we conclude that $\alpha_w(b) = 1$.
       By definition of $\alpha_w$
       in Point~\ref{cct-local-assignment}, we observe that 
       $\alpha_w(a) = 0$ and therefore
       $a \ne b$.
   \end{proof}

   A {\em down-path} in a rooted tree is a sequence $u_0, \dots, u_t$ 
   where each $u_i$ is the parent of $u_{i+1}$. A {\em root-path} 
   is a path starting at the root. Note that a root path {\em always}
   is a down-path.      
    By Observation~\ref{obs-labels-antichain},
    no variable can appear twice or more 
   on a down-path in $T_x$.
   Critical clause trees are important because of the following lemma:
   \begin{lemma}[\cite{ppsz}]
   \label{lemma-forced}
       Suppose $w \geq (k-1)^{h+1}$, where $w$
       is the strength parameter of PPSZ as in Algorithm~\ref{algo-ppsz-fixed-pi}.
       Let $x \in V$, $\pi$ a permutation of $V$, and denote by
        $A$  the set of variables coming before~$x$ in $\pi$.
       If every path from the root of $T_x$ to a leaf at depth $h$ contains a node
       $u$ with $\varlabel(u) \in A$ 
       then $\forced(x, \pi) = 1$.
   \end{lemma}
   \begin{proof}
	  Call a node $u$ of $T_x$ {\em dead}
      if $\varlabel(u)$ comes strictly before $x$ in $\pi$; 
      call a node $v$ {\em reachable}
      if the path from the root to $v$ contains no dead nodes. Note that 
      the root itself is reachable. 
	  By the assumption in the lemma, no leaf at depth $h$ is reachable,
	  and therefore every reachable node $v$ has a clause label. Let $G$ 
	  be the set of clause labels of all reachable nodes. Note that
	  $|G| \leq 1 + (k-1) + (k-1)^2 + \cdots + (k-1)^{h} \leq \frac{(k-1)^{h+1} - 1}{k-2}
	  \leq w$. 
      
            When the loop of $\textsc{ppsz}(F,\pi,w)$ arrives $x$, the formula
      $F$ has already been reduced to $F' := F|_{A = 1}$, i.e., 
      all variables in 
      $A$ have been set to $1$. We have to show that 
      $F'$ implies $x=1$. Indeed, we show something stronger, namely 
      that $G' := G|_{A = 1}$ implies $x=1$.\\
      
      To show that $G'$ implies $(x=1)$, 
      let $\gamma'$ be a total assignment to the remaining variables $V \setminus A$
      with $\gamma'(x)=0$. We have to show that $\gamma'$ violates $G'$.
	  Equivalently, we have to show that every total assignment $\gamma$ on $V$
	  that sets $\gamma(x) = 0$ and $\gamma(a) = 1$ for all $a \in A$ violates $G$.
      For such a~$\gamma$, find a maximal root path
      $u_0, u_1, \dots, u_t$ of nodes in $T_x$ such that
      $\gamma(\varlabel(u_i)) = 0$ for all $u_i$ on that path. Such a path is non-empty
      because $\gamma(\varlabel(u_0)) = \gamma(\varlabel({\rm root})) = 
      \gamma(x) = 0$ by assumption. Can the last node $u_t$ be a leaf of 
      height $h$?
      Obviously not: by assumption, such a path from root to leaf would 
      contain node, let us say $u_i$, with $a := \varlabel(u_i)$ coming before~$x$, and thus $a \in A$ and $\gamma(a) = 1$. We conclude that $u_t$ is not a 
      leaf of height $h$ and therefore has a clause label $C$. We write 
      $C = (y_1 \vee \dots y_i \vee \bar{z}_1 \vee \dots \vee 
      \bar{z}_j)$ with $i+j = k$.
      Note that $i \geq 1$ since $\alpha = (1,\dots,1)$ satisfies $C$,
      but $j=0$ is possible. 
      By Point~\ref{item-ancestor-labels} of Observation~\ref{obs-labels-antichain},
      each of the variables $y_1,\dots, y_i$ is the variable label of 
      one of the $u_0,\dots,u_t$, and thus $\gamma(y_1) = \dots= \gamma(y_i) = 0$;
      by Point~\ref{item-children-labels} of Observation~\ref{obs-labels-antichain},
      each of the $z_1,\dots,z_j$ is the variable label of a child $v$ of $u_t$.
      By maximality of the path $u_0,\dots,u_t$, we have
      $\gamma(\varlabel(v)) = 1$, and thus 
      $\gamma(z_1) = \dots = \gamma(z_j) = 1$. In other words, $\gamma$
      violates $C$ and thus $G$.            
   \end{proof}
   
   From now on, we take $h = h(n)$ to be the largest integer such that $w \geq (k-1)^{h+1}$.
   Note that 
   $\lim_{n \rightarrow \infty} h(n) = \infty$ because $\lim_{n \rightarrow \infty} w(n) = \infty$.

   \subsection{The canonical critical clause tree}

      In Point~\ref{cct-choose-clause} of the construction process 
      for critical clause trees, we might have several violated
      clauses to choose from. For example,
      if $x$ has more than one critical clause, then 
	  there are several different critical clause trees 
	  $T_x$. In this section, we make 
things unique by introducing the concept of 
{\em canonical clauses} and {\em canonical clause trees}.      
     
     \begin{definition}[Canonical critical clause]
     Among all critical clauses of $x$, we choose one 
     and call it the {\em canonical critical clause of $x$}.
	\end{definition}
      
      This choice is arbitrary but considered fixed from 
      now on. Every variable has exactly one canonical critical clause.

      \begin{definition}[Canonical critical clause tree, CCCT]    
	  The {\em the canonical critical clause tree} of a variable $x$ of 	
      height $h$ is the critical clause tree $T_x$ that is 
      constructed as above, but additionally 
      adhering to the following tie-breaking rule
      in Point~\ref{cct-choose-clause}:
   	  if the canonical
      critical clause of $\varlabel(u)$ is violated by $\alpha_u$, pick it as 
      $\clauselabel(u)$;
      otherwise, pick the lexicographically first violated clause.
   \end{definition}
      
   In particular, if $T_x$ is the CCCT of $x$, then the clause label of its 
   root is the canonical critical clause of $x$.
    We distinguish between 
   canonical and non-canonical nodes:
   
   \begin{definition}[Canonical nodes]
   In the canonical critical 
   clause tree $T_x$, we call a node $v$ {\em canonical} if, 
   for every  node $u$ on the path from the root
   to $v$ (including $\root$ and $v$), $\varlabel(u)$
   has exactly one critical clause, and this clause is $\clauselabel(u)$.
   \end{definition}

   Suppose $u$ is a maximal non-canonical node in $T_x$ (i.e., either $u$
   is the root itself or the parent of $u$ is canonical), 
   and write $a = \varlabel(u)$. There are two reasons why $u$ can be non-canonical:
   (1) the variable $a$ has two or more critical clauses; (2) the clause label $C := \clauselabel(u)$ 
   is not a critical clause (it has at least two positive literals, like $(x \vee a \vee \bar{b})$, for example). In
   the latter case, $u$ has at most  $k-2$ children in $T_x$.    
   
   \paragraph{Remark.} For the improvement for general $k$ as presented 
   in this paper, the notions of canonical clause trees 
   and canonical nodes therein are 
   not strictly necessary, and we could prove some improvement 
   without introducing it. However, since I feel that this is 
   the correct way of thinking about critical clause trees and since 
   the $k=3$ case makes heavy use of these notions, I decided to include 
   it in this paper, for future reference.

   \subsection{Labeled trees and cuts}

   We will perform certain manipulations on critical clause trees. 
   The resulting trees will not be critical clause trees in any 
   meaningful sense any more (for example, they sometimes become infinite). 
	Therefore, we introduce the more general notion of a 
	{\em labeled tree}. We note that certain labeled trees 
	appear implicitly in~\cite{ppsz}; however, our work is the 
	first that explicitly defines and uses them.
   
   \begin{definition}[Labeled trees]
   \label{definition-labeled-tree}
      We assume some countably infinite set $L$ of labels 
      with $V \subseteq L$. 
      A {\em labeled tree} is a rooted tree $T$, possibly infinite, in which 
      \begin{enumerate}
      \item each node $u$ has a label $\varlabel(u) \in L$;
      \item 
      \label{def-labeled-tree-no-ancestor} no label 
      appears twice on a down-path; that is, if $u$ is a proper ancestor of $v$ in $T$, then
      $\varlabel(u) \ne \varlabel(v)$;
      \item each node is marked either as {\em canonical} or {\em non-canonical}; if $u$ is non-canonical
      then so are all of its children (and by induction all 
      of its descendants);
      \item each leaf of $T$ is marked as either a {\em safe leaf} or an {\em unsafe leaf}.
      \end{enumerate}
      A {\em safe path in $T$} is a path starting at the root that either ends at a safe leaf
      or is infinite. We write $\can(T_x)$ to denote the set of canonical 
      nodes in $T_x$. Furthermore, all labeled trees appearing in this paper 
      are {\em $(k-1)$-ary}: each node has at most $k-1$ children.
   \end{definition}
   
    A critical clause tree of height $h$  becomes a labeled tree by simply 
    marking leaves at depth~$h$ as  {\em safe leaves}
    and all other leaves (of smaller depth) as {\em unsafe leaves} and 
    removing all clause labels. 
    
    From now on, instead of viewing $\pi$ as a permutation of the variables $V$, we view it as a 
    {\em placement} $\pi: L \rightarrow [0,1]$ on the 
    countably infinite set of labels. If $\pi$ is sampled from some continuous
    distribution 
    (for example the uniform distribution), then $\pi$ is injective with probability~$1$; its restriction to $V$ defines a permutation,
    by sorting the variables from low-$\pi$ to high-$\pi$. 
        
    \begin{definition}[$\cut$ and $\cut_r$ and $\wcut_r$]
    \label{definition-cut}
    Let $T$ be a labeled tree, $x$ the 
    label of its root,  and $r \in [0,1]$. The event $\cut_r(T)$ is an event in the probability 
    space of all placements, defined
    as follows: mark a non-root vertex  $u$ as {\em dead} if $\pi(\varlabel(u)) < r$ and 
    {\em alive} otherwise; mark $\root$ as alive.
     Then $\cut_r(T)$ is the event that
     every safe path in $T$ contains at least one dead node. 
     $\cut(T)$ is the event $\cut_{\pi(x)}(T)$, i.e., all nodes $u$ with 
     $\pi(\varlabel(u)) < \pi(x)$ are marked dead.
     
     We define $\wcut_r(T)$ (``weak cut'') 
     exactly as $\cut_r(T)$, only that we 
     additionally mark the root as dead if $\pi(x) < r$.
     \end{definition}
     
     Note that there is no corresponding event $\wcut(T)$. Weak cuts 
    only make sense with respect to a particular $r \in [0,1]$. 
    \begin{figure}
     \begin{center}
      \includegraphics[width=\textwidth]{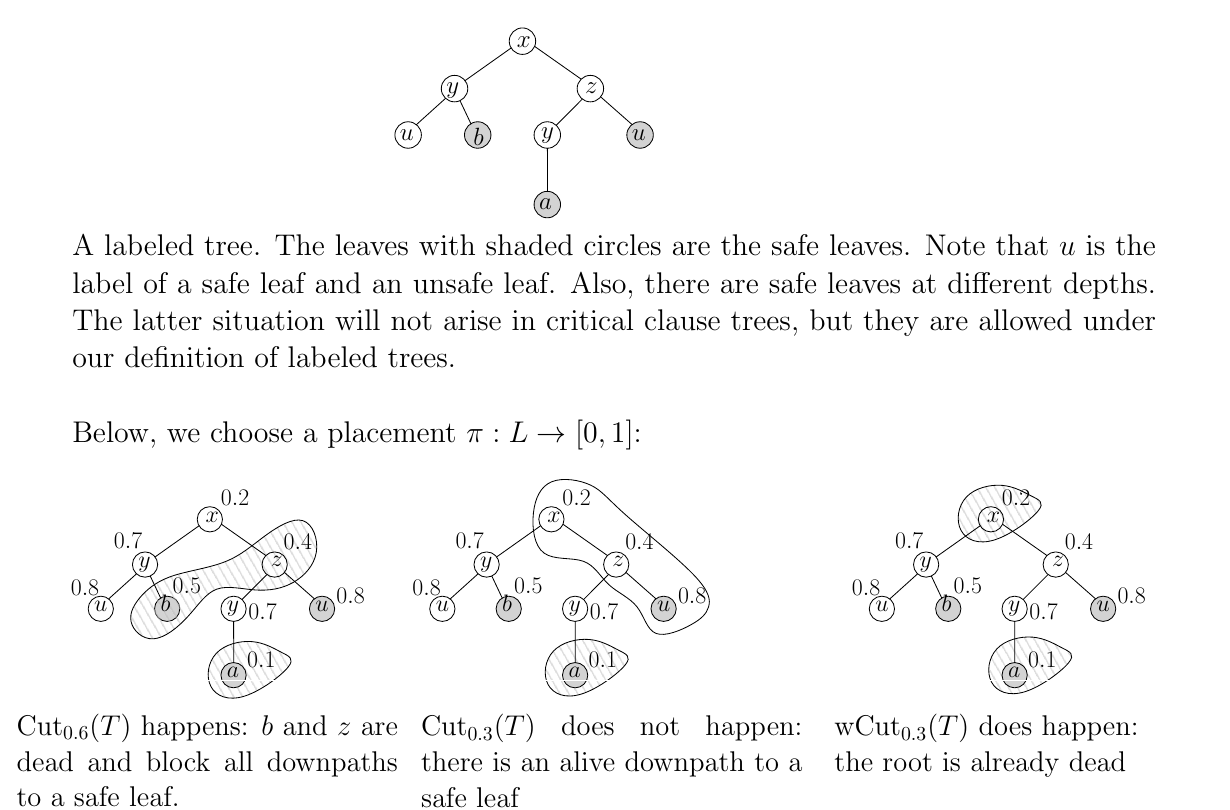}
     \end{center}
     \caption{A labeled tree, a placement, and a cut, a non-cut and a weak cut.}
    \end{figure}
     The following observation
     is simply Lemma~\ref{lemma-forced}, framed in the new terminology:
    
    \begin{observation}
       Suppose $w \geq (k-1)^{h+1}$.
        If $\cut(T_x)$ happens then $\forced(x, \pi) = 1$.
    \end{observation}
    To better understand the statement of the observation, 
    recall that the definition of the critical clause tree 
    $T_x$ depends on $h$, and the definition of $\forced$ depends 
    on $w$.
    The two definitions
    $\cut_r$ and $\wcut_r$ are intimately related: 
    
    \begin{observation}
    First, it holds that 
    \begin{align*}
    \wcut_r(T) = [ \pi(\root) < r \vee \cut_r(T)]
    \end{align*}
    Furthermore, if $T_1, \dots, T_{l}$  are the subtrees of 
    $T$ rooted at the children of the root,
    then 
    \begin{align*}
    \cut_r(T) = [ \wcut_r(T_1) \wedge \dots \wedge \wcut_r(T_{l})]
    \end{align*}
    \end{observation}
    
    A particularly important example of a labeled tree is $T_{k-1}^{\infty}$. 
    This is simply an infinite complete $(k-1)$-ary tree:
    every node has $k-1$ children, and there are no leaves. All nodes have distinct labels. If $k$ is understood, we simply
    write $T^\infty$.
    If $\pi$ is a uniformly random placement then 
    \begin{align*}
       Q^{(k)}_r := \Pr[\cut_r(T_{k-1}^\infty)] & = \left(\Pr[\wcut_r(T_{k-1}^\infty)]\right)^{k-1} \\
        P^{(k)}_r := \Pr[\wcut_r(T_{k-1}^\infty)]  & = r \vee \Pr[\cut_r(T_{k-1}^\infty)] \ , 
    \end{align*}
    where we define $a \vee b := a + b - ab$ for $a,b \in [0,1]$. 
    It follows from the two equations above that 
    $Q^{(k)}_r$ and $P^{(k)}_r$ satisfy the system of equations
    \begin{align*}
    Q & = P^{k-1}\\
    P & = r \vee Q \ . 
    \end{align*}
	By substitution, it follows that $Q^{(k)}_r$ and $P^{(k)}_r$ individually 
	satisfy the following equalities, respectively:
    \begin{align*}
    Q & = (r + (1-r) Q)^{k-1}\\
    P & = r \vee P^{k-1} \ . 
    \end{align*}
    Note that $P=1$ and $Q=1$ are always solutions of these equations, but 
    sometimes it is not the ``correct'' solution. 
    Indeed, it is a well-known result from the theory 
    of Galton-Watson branching processes that $Q^{(k)}_r$ and $P^{(k)}_r$ 
     are the {\em smallest} solutions of these equations.
    \begin{proposition}
    \label{prop-QR-convex}
        For $r \geq \frac{k-2}{k-1}$ it holds that $Q^{(k)}_r = P^{(k)}_r = 1$. On the interval $\left[ 0, \frac{k-2}{k-1}\right]$, 
        $P^{(k)}_r$ is convex and $r \leq P^{(k)}_r \leq \frac{k-1}{k-2} \cdot r$. 
        Also on that interval, $Q^{(k)}_{r} \leq \left( \frac{k-1}{k-2} \cdot r\right)^{k-1} \leq e\, r^{k-1}$.
    \end{proposition}    
        
    For $k=3$ we have explicit expression:
    \begin{align}
    \label{def-Q3}
       Q^{(3)}_r = \begin{cases}
       \pfrac{r}{1-r}^2 & \textnormal{ if $r < 1/2$} \\
       1 & \textnormal{ if $r \geq 1/2$} 
       \end{cases}
    \end{align}
    and
    \begin{align}
        \label{def-P3}
       P^{(3)}_r = \begin{cases}
       \frac{r}{1-r} & \textnormal{ if $r < 1/2$} \\
       1 & \textnormal{ if $r \geq 1/2$ }
       \end{cases}
    \end{align}
    Again, if $k$ is understood, we will simply write $Q_r$ and $P_r$.
    Recall the number $s_k$ as defined in (\ref{definition-sk})
     and observe that
    \begin{align}
    \label{def-sk}
        s_k = \Pr[\cut(T_\infty^{(k-1)})] \ .
    \end{align}

    \ppsza{} proved the following fact:
    \begin{lemma}[\cite{ppsz}]
    \label{lemma-cut-prob-usual}
       Let $T_x$ be a critical clause tree of height $h$. Then
       $\Pr[\cut_r(T_x)] \geq Q^{(k)}_r - \error(r, h)$ and 
       $\Pr[\cut(T_x) \geq s_k - \error(h)$.
       Here, 
		$\error(r,h)$ and $\error(h)$ are functions that 
       converge to $0$ as $h \rightarrow \infty$.
    \end{lemma}
    We will give a full and overly formal proof of this lemma.
    It will be instructive to go through the proof in full detail
    because it is a simplest application of our labeled-tree-machinery, on which
    we will rely more heavily later on. 
    Also, our proof is slightly different from the original proof in~\cite{ppsz}.
    \begin{proof}[Proof of Lemma~\ref{lemma-cut-prob-usual}]
       Let $T_x$ be the critical clause tree of height $h$ for variable $x$.
       We subject it to a sequence of transformation
       steps, each of which only reduces the probability
       $\Pr[\cut_r]$ and makes the tree look more and more like 
       $T_{\infty}^{(k-1)}$.\\
       
       \textbf{Step 1. Completing the tree.} Add nodes to $T_x$ to make
       it a complete $(k-1)$-ary tree of height $h$ (i.e., while some $v$
       of depth less than $h$ has fewer than $k-1$ children, create a new child).
       Give a fresh label from $L$ to each newly created node (by {\em fresh label}
       we mean a label in our infinite label space $L$ 
       that has never been used before) and mark all leaf as safe leaves (note
        that all leaves have height $h$ now). Call the resulting tree $T_0$.
       Every safe path in $T_x$ is still a safe path in $T_0$, and therefore
       $\Pr[\cut_r(T_0)] \leq \Pr[\cut_r(T_x)]$. Note that $T_0$ might not a 
       critical clause tree anymore (some non-leaves nodes do not have 
       clause labels, some variable labels are not variables at all but 
       fresh labels).\\
       
       \textbf{Step 2. Making labels distinct.} In a sequence of steps,
       we want to make sure that no label appears twice in the tree.
       To achieve this, we create a sequence $T_0, T_1, T_2, \dots$ as follows:
       
       \begin{proposition}
       \label{prop-fresh-label}
        Let $T$ be a labeled tree and $v$ a node therein. 
        Let $a'$ be a fresh label, i.e., 
        one that does not appear in $T$. Define $T_{v \rightarrow a'}$ to be 
        the same as $T$ but with $\varlabel(v) := a'$ for some 
        fresh label $a' \not \in L(T)$. Then 
        \begin{align*}
        \Pr[\cut_r(T_{v \rightarrow a'})] \leq \Pr[\cut_r(T_i)]
        \end{align*}
        holds.
       \end{proposition}

       This is how we construct the sequence $T_0, T_1, T_2, \dots$: 
       while some label (say $a$) appears more than once in $T_i$, 
       say at nodes $u$ and $v$, 
       take a fresh label $a'$ and set $T_{i+1} := T_{v \rightarrow a'}$.

       \begin{subproof}[Proof of Proposition~\ref{prop-fresh-label}]
          This is the heart of the proof. Fix  a partial assignment
          $\tau: L \setminus \{a, a'\} \rightarrow [0,1]$.
          We will show that $\Pr[\cut_r(T_{i+1}) \ | \ \tau] \leq 
          \Pr[\cut_r(T_i) \ | \ \tau]$. Here, the conditioning on $\tau$ is a shorthand
          for conditioning on the event $[\pi(l) = \tau(l) \ \forall l \in L \setminus
          \{a, a'\}]$.
          
          Note that both $\cut_r(T_i)$ and $\cut_r(T_{i+1})$ are monotone Boolean 
          functions in the atomic events $z_l := [\pi(l) < r]$ for $l \in L$.
          On the probability space conditioned on $\tau$, the event
          $\cut_r(T_{i+1})$ becomes a Boolean function $f(z_a, z_{a'})$ in the 
          Boolean variables $z_a := [\pi(a) < r]$ and $z_{a'} := [\pi(a') < r]$.
          Since $T_i$ can be obtained from $T_{i+1}$ by replacing $a'$ by $a$,
          we observe that conditioned on $\tau$, the event $\cut_r(T_i)$
          becomes the Boolean function $f(z_a, z_a)$. The proof of the claim
          now works by going through all possibilities what the monotone
          Boolean function $f$ could be.
          \begin{enumerate}
              \item If $f(z_a, z_{a'})$ is the constant $1$ function,
              then $\Pr[\cut_r(T_{i+1})  |  \tau] = 1
               = \Pr[\cut_r(T_i)  |  \tau]$; 
               similarly, if $f \equiv 0$ then both 
               probabilities are $0$.
               \item If $f(y,z) \equiv y$ then 
               $[\cut_r(T_{i+1}) |  \tau] = [\pi(a) < r] = 
               [\cut_r(T_i) |  \tau]$, so both events are the same.
               \item If $f(y,z) = z$ then 
               $[\cut_r(T_{i+1}) |  \tau] = [\pi(a') < r]$ and 
               $[\cut_r(T_i) | \tau] = [\pi(a) < r]$; the events are not 
               the same, but both have probability $r$.
               \item If $f(y,z) = y \wedge z$ then 
               {\small 
               \begin{align*}
               \Pr[\cut_r(T_{i+1}) | \tau] = \Pr[\pi(a) < r \wedge \pi(a') < r]
               = r^2 \leq r =  \Pr[\pi(a) < r]
               = \Pr[\cut_r(T_{i}) | \tau]
               \end{align*}
               }
               and the claimed inequality holds.
               \item If $f(y,z) = y \vee z$ then 
               {\small 
               \begin{align*}
               \Pr[\cut_r(T_{i+1}) | \tau] = \Pr[\pi(a) < r \vee \pi(a') < r]
               = 2r - r^2 > r =  \Pr[\pi(a) < r]
               = \Pr[\cut_r(T_{i}) | \tau]
               \end{align*}  
               }  
               and the claimed inequality does {\em not} hold. But here is the 
               thing: this cannot happen! Indeed, for 
               $[\cut_r(T_{i+1}) | \tau]$ to become 
               $[\pi(a) < r \vee \pi(a') < r]$, the two nodes $u$ and $v$
               would have to be ancestors of each other, which by 
               Observation~\ref{obs-labels-antichain} is impossible.
               To be more precise, suppose $[\cut_r(T_{i+1}) | \tau]$ is indeed
               $[\pi(a) < r \vee \pi(a') < r]$. Now set $\pi(a) = \pi(a') = 1$
               (and keep $\pi(l) = \tau(l)$ for all other labels $l$) 
               so the event does not happen. By definition of $\cut_r$, 
               this means that there is a 
               safe path $p$ in $T_{i+1}$ 
			    $\pi(\varlabel(w)) \geq r$ for all nodes $w \in p$.
			   If we change $\pi$ to $\pi'$ by setting $\pi'(a) = 0$ then
			   $[\pi'(a) < r]$ holds\footnote{Here
			   we silently assume $0 < r$. However, for the case $r=0$ it is 
			   clear that none of the atomic events $[\pi(l) < r]$ happen 
			   and thus $\cut_0(T_{i+1})$ and $\cut_0(T_i)$ are either both
			   the empty event or both the whole probability space and thus
			   happen with equal probability.}
			 and 
			   thus $\cut_r$ does happen; this means that
			   the variable label $a$ appears on the path $p$. By an analogous
			   argument, the label $a'$ appears on the path $p$, too.
			   In $T_i$, however, both those nodes have label $a$,
			   which contradicts 
			   Point~\ref{item-antichain} of 
			   Observation~\ref{obs-labels-antichain}.\footnote{Or rather it 
			   contradicts the definition of a labeled tree; the condition
			   that no ancestors-descendant pair have the same labels is 
			   baked into Definition~\ref{definition-labeled-tree},
			   and the reader should convince themselves that our transformations
			   within this proof keep this property.}			                                 
          \end{enumerate}
          Note that the claimed inequality holds in all possible cases.
          This completes the proof of Proposition~\ref{prop-fresh-label}.        
       \end{subproof}

       The sequence $T_0, T_1, T_2, \dots$ terminates because 
       each application of Proposition~\ref{prop-fresh-label} 
       introduces a new label to the tree, and the number of labels 
       will never exceed the number of nodes, which is 
	   $1 + (k-1) + (k-1)^2 + \cdots + (k-1)^{h}$.
	    Once it terminates, we obtain a tree $T_s$ in which all 
       labels are distinct and $\Pr[\cut_r(T_x)] \geq \Pr[\cut_r(T_s)]$.
       The event $\cut_r(T_s)$ can easily be phrased in terms of $T_{\infty}$: 
       for an integer $t$, let $\mathcal{C}_t$ be the event (in the probability
       space of placements $L \rightarrow [0,1]$) that all paths $p$ in $T_\infty$
       starting at the root and having length $t$ contain a non-root node $u$ 
       with $\pi(\varlabel(u)) < r$. Note that $T_s$
       is, in a sense, isomorphic to the first $h+1$ levels of~$T_\infty$,
       and therefore $\Pr[\cut_r(T_s)] = \Pr[\mathcal{C}_h]$.
       
       For given $\pi$, let $\mathbf{T}$
       be the subtree of $T_\infty$ obtained by deleting every
       non-root node $u$ with $\pi(\varlabel(u)) < r$, 
       together will all ancestors.
       Note that $\mathbf{T}$ is a random variable over our probability space
       of placements $\pi$. 
       Thus, $\mathcal{C}_t$ is the event that $\mathbf{T}$ has no root-path
       of length $t$, and $\cut_r(T_\infty)$ is the event 
       that $\mathbf{T}$ has no infinite root-path. Note that in the latter case,
       $\mathbf{T}$ is indeed finite and thus has a finite longest root-path, which 
       in turn means
       \begin{align*}
            \cut(T_\infty) = \bigcup_{t \in \N} \mathcal{C}_t \ .
       \end{align*}
       The $\mathcal{C}_t$ form an increasing sequence of events and thus,
       by the Monotone Convergence Theorem, it holds that 
       \begin{align}
             \Pr[\cut_r(T_\infty)] & = 
             \Pr\left[\bigcup_{t \in \N} \mathcal{C}_t  \right] 
              = \lim_{t \rightarrow \infty} \Pr[\mathcal{C}_t] \ .
              \label{eqn-monotone-convergence}
       \end{align}
       Define $\error(r,t) := \Pr[\cut_r(T_\infty)] - \Pr[\mathcal{C}_t]$.
       Then (\ref{eqn-monotone-convergence}) states that
       $\lim_{t \rightarrow \infty} \error(r,t) = 0$. We conclude that
       \begin{align*}
          \Pr[\cut(T_x)] & \geq \Pr[\cut(T_0)] \tag{by the transformation in Step 1} \\
          & \geq \Pr[\cut(T_s)] \tag{by the transformations in Step 2} \\
          & = \Pr[\mathcal{C}_h] \\
          & = \Pr[\cut_r(T_\infty)] - \error(r,h) \ .
       \end{align*}
       This concludes the proof of the lemma.       
    \end{proof}
    
    The important part of the proof, in particular with regard to what 
    comes below, is Proposition~\ref{prop-fresh-label}. In words, 
    the fact that multiple labels can only make things better. \ppsza{}~\cite{ppsz}
    use the FKG inequality to prove this. Our proof above does not use the 
    FKG inequality. Indeed, 
	from our proof technique one could extract 
    a stand-alone proof of the FKG inequality 
    (at least for the special case of
     monotone Boolean function and all variables being independent) that 
     does not use top-down induction but works by
     replacing variables by fresh copies, step by step, and showing that 
     each step satisfies the desired inequality.

\section{An Exponential Improvement}
\label{section-general-k}

\subsection{The highly irregular case}
\label{subsection-irregular-general-k}

\label{section-many-heavy}

In our outline in Section~\ref{rough-sketch} we mentioned that we will
address two cases separately: the ``highly irregular case'' and the 
``almost regular case''. The highly irregular case means that 
there is a set of $\Omega(n)$ variables each of which appears
as a negative literal in extraordinarily many critical clauses. This is the 
case that we treat in this section. Recall that every variable 
$x$ has one {\em canonical critical clause} that looks like 
$(x \vee \bar{y}_1 \vee \dots \bar{y}_{k-1})$.

\begin{definition}[Critical Clause Graph]
 The critical clause graph (CCG) is a directed graph on vertex set $V$ (the $n$ variables) 
 with $(k-1)n$ arcs defined as follows: for each $x \in V$, let 
 $(x \vee \bar{y}_1 \vee \dots \vee \bar{y}_{k-1})$ be its canonical critical clause and
  add the arcs $(x,y_1), \dots, (x,y_{k-1})$ to the graph.
\end{definition}

In the following figure, we show the beginning of a CNF formula with the canonical 
critical clauses underlined and part of its critical clause graph.

\begin{center}
\includegraphics[width=0.7\textwidth]{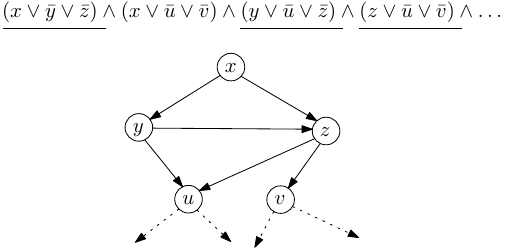}
\end{center}

Let us give a second example. 
If we take the simplest 3-CNF with a unique satisfying assignment, namely
\begin{align*}
 xyz \wedge xy\bar{z} \wedge x\bar{y}z \wedge \bar{x}yz \wedge 
\underline{ x\bar{y}\bar{z}} \wedge 
 \underline{\bar{x}y\bar{z}} \wedge 
\underline{\bar{x}\bar{y}z} 
\end{align*}
(where we replace $\vee$ by juxtaposition for compactness), we get the following 
critical clause graph:
\begin{center}
\includegraphics[width=0.25\textwidth]{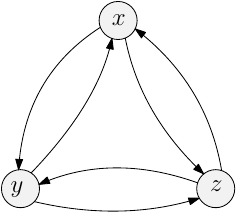}
\end{center}
We see that the CCG might have anti-parallel edges $(x,y), (y,x)$ but no 
self-loops. The latter follows since $(x \vee \bar{x} \vee \dots)$ always 
evaluates to $1$, and we assume no clause of $F$ has this form.
In the CCG, every vertex has out-degree $k-1$,
thus it has a total of $(k-1)n $ arcs, and the {\em average} in-degree
is $k-1$, too. 

\begin{definition}
For $x \in V$ let $\indeg(x)$ denote the in-degree of $x$ in the CCG. 
For a set $X \subseteq V$ we define 
\begin{align*}
    \indeg(X) := \sum_{x \in X} \indeg(x) \ . 
\end{align*}
An arc $(x,y)$ with $x, y$ both in $X$ 
counts towards $\indeg(X)$ as well. 
\end{definition}

For $l \in \N$, we call a variable $x$ {\em $l$-heavy}
if $\indeg(x) \geq l$. Let $\heavy(l)$ be the set of all 
$l$-heavy variables. If $l$ is understood from the context, we might 
simply write $\heavy$.

\begin{theorem}
\label{theorem-many-heavy}
    For every $k \in \N$, there is some $l = l(k) \in \N$ and 
    real number $\bonusheavy(k) > 0$
    (with $\bonusheavy$ standing for {\em heavy-bonus}) such that
    \begin{align*}
    \Pr[\textnormal{PPSZ succeeds}] \geq 2^{ -n + s_k n + \bonusheavy(k)\cdot\indeg(\heavy(l)) - o(n)} \ .
    \end{align*}
\end{theorem}
\begin{proof}
We define a new distribution $D$ on placements $\pi: V \rightarrow [0,1]$, in which 
$\pi(x)$ for $x \in \heavy$ 
is slightly biased towards smaller values. For this, 
we fix some continuous differentiable
$\gamma: [0,1] \rightarrow \mathbf{R}^+_0$ such that $\gamma(0) = \gamma(1) = 0$ and let
$\phi := \gamma'$ be its derivative. For sufficiently small $\epsilon > 0$, it holds that 
$1 + \epsilon \phi(r) \geq 0$ for all $r \in [0,1]$. This implies that 
$\int_0^r (1 + \epsilon \phi(x)) dx = r + \epsilon \gamma(r) =: \Phi(r)$
is monotonically increasing, $\Phi(0) = 0$ and $\Phi(1) = 1$. Thus, $\Phi$ it is the 
cumulative distribution function of a distribution on $[0,1]$.
\begin{definition}
 Let $D^{\gamma}_{\epsilon}$ be the distribution on $[0,1]$ with the cumulative 
 probability distribution function $r + \epsilon \gamma(r)$ and 
 density function $1 + \epsilon \phi(r)$.
\end{definition}

In this proof, we will state all intermediate results in terms of a general 
function $\gamma$ satisfying the above requirements. If we want to show 
{\em some} improvement for $k$-SAT, rather than concrete numerical results, 
the precise shape of $\gamma$ does not matter. We might just as well 
define $\gamma(r) = r(1-r)$ and run with it. In the following, 
we ask the reader to interpret $\gamma$ as the general function described above; 
however, since we will be juggling several constants, which depend on $k$ 
and our choice of $\gamma$, it will be advantageous sometimes to think of $\gamma$ 
as a very concrete function, not depending on anything, for example like 
$\gamma(r) = r (1-r)$. For this choice, we have $\phi(r) = 1 - 2r$, and 
$D^{\gamma}_{\epsilon}$ is a distribution on $[0,1]$ for all 
$0 \leq \epsilon \leq 1$.

Let $D$ be the distribution on placements $\pi: V \rightarrow [0,1]$ that samples $\pi(x) \in [0,1]$ 
uniformly for all $x \not \in \heavy$ and $\pi(x) \sim D^{\gamma}_{\epsilon}$ for all $x \in \heavy$.
For heavy $x$ it holds that $\Pr[ \pi(x) < r] = r +\epsilon \gamma(r) \geq r$. 
By using the indicator notation $[\textsc{Statement}]$ that 
gives $1$ if \textsc{Statement} holds and~$0$ otherwise, we can 
concisely write
\begin{align*}
   \Pr_{\pi \sim D} [\pi(x) < r] & = r + [x \in \heavy] \epsilon \gamma(r) \ .
\end{align*}

Loosely
speaking, heavy variables $x$ tend to come
{\em earlier} in $\pi \sim D$ than non-heavy ones. Consequently, for heavy $x$, we expect $\Pr[\forced(x,\pi)]$ to be 
{\em smaller} under $D$ than under the uniform distribution
$U_{[0,1]}$. The following lemma 
bounds the magnitude of this ``penalty'':

\begin{lemma}
   If $x \in \heavy$ then 
   $\Pr_{\pi \sim D} [\forced(x, \pi)]$ is at least 
   \begin{align*}
    \int_0^1 Q^{(k)}_r (1 + \epsilon \phi(r)) \, dr - o(1) = s_k - \epsilon 	\cdot \heavypenalty(k) - o(1) 
   \end{align*}
   for $\heavypenalty(k) := - \int_0^1 Q^{(k)}_r \phi(r) dr $,
    which depends only $k$ and on $\gamma$, but not on $k'$
   nor on $\epsilon$.
   The acronym $\heavypenalty$ stands for {\em heavy-penalty}.
\end{lemma}
Remark: the minus sign in the definition of $\heavypenalty$ might look surprising, 
but the integral $\int_0^1 Q^{(k)}_r \phi(r) dr$ is indeed 
at most $0$, which can be proved by basic calculus  or simply seen from the 
context (but this proof is not necessary for the correctness of the lemma).

\begin{subproof}
   Under $\pi \sim D$, the values $\pi(y)$ are independent
   for all variables $y$. Let $U$ denote the uniform 
   distribution on placements. Observe that we can define a coupling
   $(\pi_D, \pi_U)$ such that $\pi_D \sim D$ and $\pi_U \sim U$ and 
   $\pi_D(y) \leq \pi_U(y)$ hold always and for all variables $y$. 
   We could, for example, achieve this by sampling $u \in [0,1]$
   uniformly at random, let $t$ be the unique number in $[0,1]$
   for which $t + \epsilon \gamma(t) = u$, and set $\pi_U(y) = u$
   and $\pi_D(y) = t$. Then $\pi_U(y)$ is uniform over $[0,1]$
   and $\pi_D(y)$ is distributed according to $D_\epsilon^\gamma$.
   Furthermore, $t \leq u$ holds.
   
   Since $\cut_r(T_x)$ depends monotonically on the atomic
   events $[\pi(y) < r]$ and since $[\pi_U < r]$ implies $[\pi_D < r]$,
   we conclude that $\pi_U \in \cut_r(T_x)$ also implies 
   $\pi_D \in \cut_r(T_x)$; in other words,
   \begin{align}
      \Pr_D[\cut_r(T_x)] \geq \Pr_U[\cut_r(T_x)] \ .
      \label{ineq-D-better-than-U}
   \end{align}
   From the previous section, in particular Lemma~\ref{lemma-cut-prob-usual},
    we know that
   \begin{align}
   \Pr_U[\cut_r(T_x)] \geq Q_r^{(k)} - \error(r,h)\ . 
   \label{ineq-cut-prob-usual} 
   \end{align}
   The lemma now follows from setting $r = \pi(x)$ and 
   taking expectation over $\pi(x)$, and 
   using the fact that $\pi(x) \sim D$. That is,
   we compute 
   \begin{align*}
   \Pr_D[\cut(T_x)] & = \E_{r \sim D} \left[ \Pr_D [\cut(T_x)] \right] \\
   & = \int_0^1 \Pr_D [\cut_r(T_x)] (1 + \epsilon \phi(r)) dr 
   \tag{multiply by density and integrate}
   \\
   & \geq \int_0^1 \Pr_U [\cut_r(T_x)] (1 + \epsilon \phi(r)) dr  
   \tag{by (\ref{ineq-D-better-than-U})} \\
   & \geq \int_0^1 \left(Q_r^{(k)} - \error(r,h)\right)(1 + \epsilon \phi(r)) dr  
   \tag{by (\ref{ineq-cut-prob-usual})} \\
   & = \int_0^1 Q_r^{(k)} (1 + \epsilon \phi(r)) dr - 
   \int_0^1 \error(r,h) (1 + \epsilon \phi(r)) dr \ . 
   \end{align*}	
   Now since $\phi(r)$ is bounded and $\error(r,h)$ converges to $0$ as $h$ grows,
   for each fixed $r$, the second integral in the above expression also converges to $0$
   as $h$ grows, which explains
   the $o(1)$ term in the statement of the lemma. 
\end{subproof}

If $x \not \in \heavy$ but has a heavy out-neighbor, i.e., if there is an
arc $(x,y)$ for some $y \in \heavy$, we expect 
$\Pr[\forced(x,\pi)]$ to be {\em larger} under $D$. Also, if there are several 
heavy out-neighbors, we expect this ``bonus'' to be even larger. 
To formalize this intuition, we introduce some notation.

\begin{definition}
   For $x \in V$ and $Y \subseteq V$, we define $e(x,Y)$ to 
   be the number of arcs $(x,y)$ with 
   $y \in Y$. For sets $X,Y \subseteq V$ (not necessarily disjoint), 
   we define $e(X,Y)$ to be $\sum_{x \in X} e(x,Y)$.
\end{definition}

The next lemma formally states that the ``bonus'' for $x \not \in \heavy$ is proportional 
to its number of heavy out-neighbors:
\begin{lemma}
\label{lemma-bonus-heavy}
 If $x \not \in \heavy$ then
 $\Pr_{\pi \sim D} [\forced(x, \pi)]$ is at least
 \begin{align}
     s_k + \epsilon\, \heavychild(k)\, e(x, \heavy) - o(1) 
     \label{ineq-lemma-bonus-heavy}
 \end{align}
 where $e(x, \heavy)$ is the number of arcs $(x,y)$ with $y \in \heavy$ and 
 \begin{align*}
 \heavychild(k) :=  \int_0^1  \gamma(r)\left(P^{(k)}_r\right)^{k-2} \left(1 - Q^{(k)}_r\right) \, dr \ .
 \end{align*}  
   The acronym $\heavychild$ stands for {\em heavy-child bonus}.
  Note that $\heavychild(k)$ only depends on $\gamma$ and $k$.
	Furthermore, $\heavychild(k) > 0$ as long as $\gamma$ is positive 
	somewhere in $\left[ 0, \frac{k-2}{k-1}\right]$, which 
	our concrete choice $r ( 1-r)$ certainly is.
\end{lemma}

\begin{subproof}
	We first show the last part of the lemma, namely that 
	$\heavychild(k) > 0$. It is well-known  from the theory of Galton-Watson branching processes that $Q^{(k)}_r < 1$ for all $r < \frac{k-2}{k-1}$ 
	and thus
$\left(P^{(k)}_r\right)^{k-2} \left(1 - Q^{(k)}_r\right) > 0$ on the interval $\left[ 0, \frac{k-2}{k-1}\right]$. 
This means that $\heavychild(k)> 0$ for our choice 
$\gamma(r) = r(1-r)$ (and basically for any other legal 
choice for $\gamma$, provided that it is positive somewhere
on the interval $\left[ 0, \frac{k-2}{k-1}\right]$).\\

	We now prove (\ref{ineq-lemma-bonus-heavy}).
    Let $y_1,\dots,y_{k-1}$ be the labels of the children of the root of $T_x$ and $T_i$ be the 
    subtree of $T_x$ rooted at $y_i$.
    Similar to Step 2 in the proof of Lemma~\ref{lemma-cut-prob-usual}, 
    we can assume that all nodes of $T_x$ have distinct 
    variable labels. Formally, we need a more general version of 
    Proposition~\ref{prop-fresh-label}, one that deals with non-uniform 
    distributions:
       
       \begin{proposition}
       \label{prop-fresh-label-nonuniform-but-independent}
        Let $T$ be a labeled tree and $v$ a node therein. 
        Let $a'$ be a fresh label, i.e., 
        one that does not appear in $T$. Define $T_{v \rightarrow a'}$ to be 
        the same as $T$ but with $\varlabel(v) := a'$ for some 
        fresh label $a'$. Furthermore, let $D$ be a probability distribution
        on placements $\pi: L(T) \rightarrow [0,1]$ under which the variables 
        $\{\pi(l)\}_{l \in L}$ are independent. 
        Define a new distribution $D'$ by sampling $\pi(l) \sim D$ for all 
        labels $l \in L(T)$ and additionally sampling $\pi(a')$ from the same 
        distribution as~$\pi(a)$.
        Then 
        \begin{align*}
        \Pr_{D'}[\cut_r(T_{v \rightarrow a'})] \leq \Pr_D[\cut_r(T_i)]
        \end{align*}
        holds.
       \end{proposition}
       
    The proof of this proposition is almost the same as of 
    Proposition~\ref{prop-fresh-label}.
    For every variable $z \not \in \{x, y_1,\dots,y_{k-1}\}$, it holds that
    $\Pr_D[\pi(z) < r] \geq r$. This means we can assume 
    pessimistically\footnote{We will not go through a formal
    coupling argument from now on.} that $\pi(z)$ is uniform over $[0,1]$.
    Since all labels are distinct, we have $\Pr_D[\cut_r(T_x)] = \prod_{i=1}^{k-1} \Pr_D[\wcut_r(T_i)]$ and
    \begin{align*}
        \Pr_D[\wcut_r(T_{y_i})] & = \Pr_D[\pi(y_i) < r] \vee \Pr_D[\cut_r(T_{i})] \\
         & = (   r + [y_i \in \heavy]\epsilon\gamma(r) ) \vee \Pr_D[\cut_r(T_{i})] \\
         & \geq (   r + [y_i \in \heavy]\epsilon\gamma(r) ) \vee (Q_r  - o(1)) \\
         & = r + [y_i \in \heavy]\epsilon\gamma(r)  +
         \left( 1 - r - [y_i \in \heavy]\epsilon\gamma(r) \right) (Q_r - o(1)) \\
         & = r + (1-r) Q_r + [y_i \in \heavy]\epsilon\gamma(r) (1 - Q_r)  - o(1) \\
         & = P_r + [y_i \in \heavy]\epsilon\gamma(r) (1 - Q_r)  - o(1) \ . 
    \end{align*}
    Therefore,
    \begin{align*}
     \Pr_D[\cut_r(T_x)] & \geq \prod_{y: x \rightarrow y}
      \left( P_r + [y \in \heavy]\epsilon\gamma(r) (1 - Q_r) \right)
     - o(1) \\
     & \geq (P_r)^{k-1} + \sum_{y: x \rightarrow y}[y \in \heavy]\epsilon \gamma(r) 
     (P_r)^{k-2} (1 - Q_r)  - o(1) \\
     & = Q_r + e(x, \heavy)\epsilon\gamma(r) (P_r)^{k-2} (1 - Q_r)  - o(1) \ . 
    \end{align*}
    Since $\pi(x)$ is uniform over $[0,1]$, we get 
    $\Pr_{\pi \sim D}[\cut(T_x)]$ by integrating the above expression
    over $[0,1]$. This yields (\ref{ineq-lemma-bonus-heavy}) 
    and concludes the proof 
    of Lemma~\ref{lemma-bonus-heavy}.
\end{subproof}

We can prove Theorem~\ref{theorem-many-heavy} by summing over all variables: 
\begin{align}
   & \frac{ \sum_{x \in V} \Pr_D[\forced(x, \pi)] - s_k \, n}{\epsilon} \\
   & \geq - \heavypenalty(k) |\heavy| 
     + \heavychild(k) \sum_{x \not \in \heavy} e(x, \heavy) - o(n) \nonumber \\
     & \geq -\heavypenalty(k) |\heavy| + \heavychild(k) e(V \setminus \heavy, \heavy)
     - o(n) \ . \label{ineq-bonus-heavy-children} 
\end{align}
Next, we show that $e(V \setminus \heavy, \heavy)$ is large:
\begin{align*}
e( V \setminus \heavy, \heavy) & = e(V, \heavy) - e(\heavy, \heavy)\\
 & = \indeg(\heavy) - e(\heavy, \heavy)\\
&  \geq \indeg(\heavy) - e(\heavy, V) \\
& =  \indeg(\heavy) - k |\heavy| \ ,
\end{align*}
where the last equality follows because every variable, in particular every 
$x \in \heavy$, has $e(x,V) = {\rm outdeg}(x) = k-1$.
Combining the two, we get 
\begin{align*}
(\ref{ineq-bonus-heavy-children}) & = 
-\heavypenalty(k)|\heavy| + \heavychild(k)e(V \setminus \heavy, \heavy) -o(n)\\
     & \geq -\heavypenalty(k)|\heavy| + \heavychild(k)\left( \indeg(\heavy) - k |\heavy| \right) -o(n)\\
     & = \heavychild(k)\, \indeg(\heavy) - (\heavypenalty(k)+ k \heavychild(k)) |\heavy| - o(n)\\
     & 
\geq \left( \heavychild(k)- \frac{\heavypenalty(k)+ k \heavychild(k)}{l} \right)  \indeg(\heavy(l)) - o(n)\ , 
\end{align*}

where the last inequality follows from the fact that 
$\indeg(\heavy) \geq l |\heavy|$. We also write $\heavy(l)$ in the last 
line to emphasize that its definition involves the number $l$.
We choose $l = l(k)$ sufficiently large to make sure 
the expression in the parenthesis is some $c = c(k) > 0$. 
This shows that
$\E_D[\forced(\pi)] \geq s_k n - o(n) + \epsilon \, c(k) \, 
\indeg(\heavy(l(k))$.
Using (\ref{distorted-jensen}), we see that our ``gain'' in the exponent
of the success probability is at least\footnote{ignoring the $o(n)$ term for readability}
 \begin{align}
\log_2 \Pr[ \textnormal{PPSZ succeeds}] + n - s_k n \geq 
  \epsilon \, c(k) \, \indeg(\heavy) -  \KL(D || U) \ .
  \label{bonus-heavy-before-KL}
\end{align}
Since all values $\pi(x)$ are independent under both $D$ and $U$, the Kullback-Leibler divergence
becomes additive, and $\KL(D || U) = \KL (D^{\gamma}_{\epsilon} || U_{[0,1]} ) \cdot |\heavy|$, where 
$U_{[0,1]}$ is the uniform distribution on $[0,1]$.
\begin{proposition}
    Define $\Psi := \int_0^1 \phi^2(r)\, dr$. Then 
    $\KL(D^{\gamma}_{\epsilon} || U_{[0,1]}) \leq \log_2(e) \, \epsilon^2 \Psi$
    holds and furthermore
    $\KL(D|| U) \leq \log_2(e) \, \epsilon^2 \Psi |\heavy|$.
    \label{prop-bound-KL-manyheavy}
\end{proposition}
\begin{subproof}
     We abbreviate $t := \epsilon \phi (r)$. By definition of $\KL$ for continuous distributions, we have
     \begin{align*}
       \ln(2) \,  \KL(D^{\gamma}_{\epsilon} || U_{[0,1]}) 
       & = \int_0^1 (1 + t) \ln (1 + t) \, dr  \\
 &       \leq \int_0^1 (1 + t)  t\, dr \tag{since $\ln (1 + t) \leq t$}\\
       & = \int_0^1 t \, dr + \int_0^1 t^2\, dr \\
        & =  \int_0^1 \epsilon \phi(r) dr + \int_0^1 (\epsilon \phi(r))^2 \, dr 
        \tag{putting back $t = \epsilon\phi(r)$} \ . \\
     \end{align*}
     The first integral is $0$ because $\int_0^1 \phi(r) d r = \gamma(1) = 0$; 
     the second is $\epsilon^2 \Psi$ by definition of $\Psi$.
\end{subproof}
Combining Proposition~\ref{prop-bound-KL-manyheavy}  and (\ref{bonus-heavy-before-KL}) 
gives
\begin{align*}
\log_2 \Pr[ \textnormal{PPSZ succeeds}] + n - s_k n &  \geq 
  \epsilon \, c(k) \, \indeg(\heavy(l(k)) -  \log_2(e) \epsilon^2 \Psi |\heavy(l(k))| \\
  & \geq
  \left(\epsilon \, c(k) - \frac{\log_2(e) \epsilon^2 \Psi}{l(k)} \right) \indeg(\heavy(l(k)) 
\end{align*}

Since none of $c(k)$, $l(k)$, $k$, and $\Psi$ depends on $\epsilon$,
we can choose $\epsilon$ sufficiently small to ensure that the term 
in parentheses above some $\bonusheavy(k) > 0$ and thus 
$\log_2 \Pr[ \textnormal{PPSZ succeeds}] \geq - n + s_k n - o(n) + \bonusheavy(k) \cdot \indeg(\heavy(l(k))$, which 
proves Theorem~\ref{theorem-many-heavy}.
\end{proof}

Note that in theory we could hammer out how $\bonusheavy(k)$ depends on $k$. 
This would surely be tedious because we encountered several  integrals 
involving $Q^{(k)}_r$ and $P^{(k)}_r$, for which we don't have a closed form.
In its current form, the theorem gives us no explicit bounds on $\bonusheavy(k)$, 
other than that it is positive.

\subsection{Privileged variables---when \texorpdfstring{$\Pr[\forced(x,
    \pi)]$}{Pr[Forced(x,pi)]} is already larger}

There are some abnormal cases that will interfere with our analysis below. Luckily,
all those cases will imply that the variables involved already have a substantially
{\em higher} probability of being forced. 
\begin{definition}
\label{definition-privileged}
    A variable $x$ is called {\em privileged} if (1) $x$ has at least two critical clauses
    or it has a critical clause tree $T_x$ 
    such that (2) there is a variable $y$ that appears simultaneously at depth $1$ and $2$ 
    or (3) $T_x$ has fewer than $(k-1)^2$ nodes
    at depth $2$. Let $\setprivileged$ be the set of all privileged variables.
  \end{definition}
  \RestateInit{\restatelempriv}
\begin{restatable}{lemma}{lempriv}
    \label{thm:jl_rep}\RestateRemark{\restatelempriv}
\label{lemma-privileged}
   There is some value $\privlgd{} > 0$, depending only on $k$, such that
   $\Pr[\forced(x, \pi)] \geq s_k + \privlgd{} - o(1)$ for all privileged variables $x$,
   where $o(1)$ converges to $0$ as $w$ grows.
\end{restatable}

We prove the lemma in Appendix~\ref{appendix}.
The proof is somewhat technical, partially overlaps with proofs 
also found in Hansen et al.~\cite{HKZZ}, but is conceptually 
rather straightforward. 

\subsection{The almost regular case}
\label{section-general-k-almost-regular}
Theorem~\ref{theorem-many-heavy} already gives us an exponential improvement
over the old analysis of PPSZ provided that $\indeg(\heavy)$ is large (linear in $n$).
In this section, we will come up with a corresponding bound that works well if $\indeg(\heavy)$
is small. The final bound will then follow from a meet-in-the-middle argument.

\begin{lemma}
\label{lemma-large-backbone}
   There is a collection $G$ of canonical critical clauses 
   such that no two clauses in $G$ share a variable and $|G| \geq \frac{n - \indeg(\heavy(l))}{k l}$.
\end{lemma}

\begin{proof}
   Set $\mathcal{C}$ to be the set of all canonical critical clauses.
   We have $|\mathcal{C}| = n$ because there is exactly one critical clause for  
   each variable.
   Greedily pick a clause $C \in \mathcal{C}$, add it to $G$, and delete
   from $\mathcal{C}$ all clauses $C'$ that share a variable with $C$ (this obviously
   includes $C$ itself). Repeat this step for as long as possible.
   
   How many clauses does each step remove from $\mathcal{C}$? Let
   $x_1,\dots,x_k$ denote the variables of $C$. For sure we remove
   the canonical critical clauses of $x_1, \dots, x_k$. Additionally, we remove, for 
   each $1 \leq i \leq k$, all canonical critical clauses containing $\bar{x}_i$.
   This removes at most a total of $k + \sum_{x \in \var(C)} \indeg(x)$ clauses.
   Thus, the total number of canonical critical clauses removed in this process is at most 
   \begin{align*}
     &   \sum_{C \in G} \left( k + \sum_{x \in \var(C)} \indeg(x) \right) \\
     \leq &  |G|k + \sum_{C \in G} \sum_{\substack{x \in \var(C) \\ x \not \in \heavy(l)}} (l-1) 
     + \sum_{C \in G} \sum_{\substack{x \in \var(C) \\ x  \in \heavy(l)}} 
     \indeg(x)  \\
     \leq &  |G| k + |G| k (l-1) + \indeg(\heavy(l)) = |G| k l + \indeg(\heavy(l)) \ .
   \end{align*}
   The process ends $\mathcal{C}$ becomes empty.
   Since $|\mathcal{C}| = n$ in the beginning, it removes exactly $n$ clauses, and therefore
   $ |G| k l + \indeg(\heavy(l)) \geq n$.
   Solving for $|G|$ proves the lemma.
\end{proof}

We take a set $G$ of canonical critical clauses as guaranteed by the lemma. 
We form a collection $M'$ of disjoint pairs of variables by selecting, 
for each $C \in G$, two negative literals $\bar{y}, \bar{z} \in C$ 
 and adding $\{y,z\}$ to $M'$.\footnote{For $k=3$, the clause $C$ contains exactly
 two negative literals; for larger $k$, we select two literals arbitrarily.} 
 Call $x$ a {\em parent} of $\{y,z\}$ if 
the canonical critical clause of $x$ contains the literals $\bar{y}$ and $\bar{z}$. 
Note that this name makes sense since in the canonical critical clause tree $T_x$ of $x$,
the root has two children with labels $y$ and $z$, respectively.
Every $\{y,z\} \in M'$ has at least one parent.
We form a final collection $M \subseteq M'$ of pairs by removing each pair $\{y,z\}$ with parent $x$ from $M'$ 
if at least one of $x,y,z$ is in $\setprivileged$. Each privileged variable $x$ is ``responsible'' for the removal
of at most two elements from $M'$: one if the canonical clause of $x$ happens to be in $G$;
one if there is some $x'$ with $\{x, x'\} \in M'$. Therefore,
\begin{align}
\label{lower-bound-size-M}
|M| \geq \frac{ n - \indeg(\heavy)}{kk'} -  2\, |\setprivileged| \ . 
\end{align}
We denote the set of all parents $x$ of some $\{y,z\} \in M$ by $\parentM$.

\begin{theorem}
\label{theorem-large-M}
    For every $c > 0$ and $k \geq 3$ there is some $c' > 0$ such that
    if $|M| \geq c n$ then $\Pr[\textup{PPSZ succeeds}] \geq 2^{ -n + s_k n + c' n - o(1)}$.
\end{theorem}

From here, the proof of Theorem~\ref{theorem-general} is simple. Let $c_1$ be a small constant,
depending on $k$. If $|\indeg(\heavy)| \geq c_1 n $ then we can apply
Theorem~\ref{theorem-many-heavy}. 
If $|\setprivileged| \geq c_1 n$ we can apply 
Lemma~\ref{lemma-privileged}. Otherwise,  (\ref{lower-bound-size-M}) implies
that $|M| \geq c_2 n$ for some $c_2$ depending on $c_1$ and $k$, and $c_2 > 0$
if $c_1$ is small enough. We can now apply Theorem~\ref{theorem-large-M}
and are done. This proves Theorem~\ref{theorem-general}. It remains 
to prove Theorem~\ref{theorem-large-M}.\\

The remainder of this section contains the proof of Theorem~\ref{theorem-large-M}.
At first reading of what follows, it might even be helpful to think of  $\setprivileged$ as being empty.

\subsection{Using disjoint pairs to define a distribution}

We choose some $\rho \leq \frac{k-2}{k-1}$, to be determined later. 
Define $\gamma: [0,1] \rightarrow \mathbb{R}^+_0$ by
\begin{align}
\gamma(r) := \begin{cases}
r (\rho - r) & \textnormal{ if $r \leq \rho$}\\
0 & \textnormal{ if $r \geq \rho$.}
\end{cases}
\label{definition-gamma}
\end{align}
Let $\phi := \gamma'$ and extend this via $\phi(\rho) := - \rho$.
Observe that $\phi_{\min} := \min_{r \in [0,1]} \phi(r) = -\rho$.
  We fix  some $\epsilon > 0$. 
Let $D^{\gamma,\square}_{\epsilon}$ be the distribution on $[0,1] \times [0,1]$ whose density at $(r,s)$
is $1 + \epsilon \phi(r) \phi(s)$. 
This really is a  density, provided that $1 + \epsilon \phi(r) \phi(s) \geq 0$ for all $r$ and $s$.
Let $D$ be the distribution on placements that 
samples $(\pi(y), \pi(z)) \sim D^{\gamma,\square}_{\epsilon}$ for each $\{y,z\} \in M$
and
samples $\pi(x) \in [0,1]$ uniformly for each remaining variable. All samplings are done independently.
We define 
\begin{align}
\delta  = \delta(r) := \epsilon |\phi_{\min}| \gamma(r) = \epsilon \rho \gamma(r)
\label{def-delta}
\end{align}

\begin{lemma}
\label{lemma-benefit-damage}
   Let $r \in [0,1]$. Then 
   \begin{align*}
   \Pr_D[\cut(T_x) \ | \ \pi(x) = r] \geq Q_r - \damage(r) + \benefit(r) \cdot \mathbf{1}_{x \in \parentM} - o(1) \ , 
   \end{align*}
   where
   \begin{align}
       \damage(r) :=  \damageSpelledOut \label{def-damage} \\
       \benefit(r) :=  \benefitSpelledOut  \ . \label{def-benefit} 
   \end{align}
   and  the $o(1)$ converges to $0$ as $h$ grows.
\end{lemma}

\begin{proof}
The proof works by constructing an easy-to-analyze tree and distribution that serve as a pessimistic
estimate for $\cut_r(T_x)$. First, we make a simple but important observation
about~$M$ and the labels in $T_x$:
\begin{observation}
\label{observation-xy-not-in-M}
    Let $u$ be a node of depth $1$ in $T_x$ and $y = \varlabel(u)$. Then $\{x,y\} \not \in M$.
\end{observation}
\begin{proof}
   Suppose $\{x,y\} \in M$, for the sake of contradiction,
   and let $a$ be one of their parents.
   So $T_a$ contains two nodes $X, Y$ at depth $1$ with labels $x$ and $y$, respectively.
   What is the clause label $C_X$ of node $X$ in $T_a$? First, if it is $C_x$, the canonical critical 
   clause of $x$, then $X$ has a child with label $y$, since $\bar{y} \in C_x$.
   This means that $y$ appears in $T_a$ at level $1$ and $2$, so $a$ 
   is privileged.
   Second, if it is not $C_x$ but is {\em some} critical clause, it must be a critical clause for $x$
   or $a$, meaning that $x$ or $a$ has at least two critical clauses, so 
   at least one of them is privileged.
   Third, if $C_X$ is not a critical clause, then $C_X$ has at most $k-2$ negative literals and
   $X$ has at most $k-2$ children, meaning $T_a$ has fewer than $(k-1)^2$ nodes 
   at level $2$. In either case, at least one of $a$ and $x$ is privileged,
   meaning we would have eliminated $\{x,y\}$ from $M$. This is a contradiction.
\end{proof}
This observation has two important consequences:
\begin{observation}
\label{observation-not-parentM}
    Suppose $x \not \in \parentM$  and let $y_1,\dots,y_{k-1}$ be the labels of the depth-1-nodes in $T_x$.
    Then $\pi(x), \pi(y_1), \dots, \pi(y_{k-1})$ are independent and uniform under $D$.
\end{observation}
\begin{observation}
\label{observation-parentM}
    Suppose $x \in \parentM$ is a parent of $\{y,z\} \in M$. Let $y,z, v_1,\dots,v_{k-3}$ be the labels
    of the depth-1-nodes in $T_x$. Then
    $\pi(x), ( \pi(y), \pi(z)), \pi(v_1), \dots, \pi(v_{k-3})$ are independent under $D$, 
    the pair $( \pi(y), \pi(z))$ has distribution $D^{\gamma, \square}_{\epsilon}$, and the other 
    $k-2$ variables     are uniform over $[0,1]$.
\end{observation}
The upshot is that we completely understand the distribution of $\pi(l)$ for the labels on the 
level $0$ and $1$ of $T_x$. Starting from level $2$ downwards, the distribution can become complicated,
so we resort to a pessimistic estimate:
\begin{observation}
\label{obseration-r-delta}
   Let $v$ be a variable and let $\tau: V \setminus \{v\} \rightarrow [0,1]$ be a particular placement of all
   other variables. Let $r \in [0,1]$. Then
   $\Pr_D[\pi(v) < r \ | \ \tau] \geq r - \delta$,
   for $\delta$ as defined in (\ref{def-delta}).
\end{observation}
\begin{proof}
   If $v$ is not contained in any pair of $M$, then $\Pr_D[\pi(v) < r \ | \tau] = r$. If $\{v,w\} \in M$ then
   $\Pr_D[\pi(v) < r \ | \tau] = \Pr_{D^{\gamma,\square}_{\epsilon}}[\pi(v) < r \ | \ \pi(w) = \tau(w)]
   = r + \epsilon \gamma(r) \phi(\tau(w)) \geq r - \epsilon |\phi_{\min}| \gamma(r)$.
\end{proof}

\paragraph{The two pessimistic distributions $D_M$ and $D_{\bar{M}}$.}
Fix $r \in [0,1]$ and let $T^{\infty}$ be the complete infinite $(k-1)$-ary tree in which all labels
are distinct. Since all the labels are distinct, we take the liberty of writing $\pi(v)$ 
instead of $\pi(\varlabel(v))$ for a node $v$ in $T^{\infty}$. We specify two distributions 
$D_M$ and $D_{\bar{M}}$ on placements $L \rightarrow [0,1]$. 
First, we set $\Pr[ \pi(v) < r] := r - \delta$ under both $D_{\bar{M}}$ and $D_M$ for all
nodes $v$ of depth at least $2$ in $T^{\infty}$. Second, we 
let $y_1, \dots, y_{k-1}$ be the nodes 
of depth $1$ in $T^{\infty}$ and sample all $\pi(y_i) \in [0,1]$ uniformly and independently under $D_{\bar{M}}$.
Under $D_M$, we sample $\pi(y_3, \dots, y_{k-1})$ uniformly and independently but sample
$(\pi(y_1), \pi(y_2)) \sim D^{\gamma,\square}_{\epsilon}$.
This does not fully specify a distribution on placements $L \rightarrow [0,1]$ but 
it does specify the joint distribution of the events $[\pi(v) < r]$. Since we are only
interested in $\Pr[\cut_r(T_x)]$, this is enough. Note that we also 
do not need to specify a distribution for $\pi({\rm root})$.
\begin{lemma}
\label{obs-pessimistic-tree}
   If $x \not \in \parentM$ then $\Pr[\cut(T_x) \ | \ \pi(x) = r] \geq \Pr_{D_{\bar{M}}} [\cut_r(T^\infty)] - o(1)$.
   If $x  \in \parentM$ then $\Pr[\cut(T_x) \ | \ \pi(x) = r] \geq \Pr_{D_{M}} [\cut_r(T^\infty)] - o(1)$.
\end{lemma}
The $o(1)$-term comes from the fact that $T_x$ is a critical clause tree of height $h$, whereas
$T^{\infty}$ is an infinite tree.
\begin{proof}
  We mimic the proof of Lemma~\ref{lemma-cut-prob-usual} by assigning 
  each node at depth 2 or greater a fresh label.
  We need a third version of Proposition~\ref{prop-fresh-label}:
       \begin{proposition}
       \label{prop-fresh-label-correlated}
        Let $T$ be a labeled tree and $v$ a node therein. 
        Let $a'$ be a fresh label, i.e., 
        one that does not appear in $T$. Define $T_{v \rightarrow a'}$ to be 
        the same as $T$ but with $\varlabel(v) := a'$ for some 
        fresh label $a'$. 
        
        Let $D$ be a probability distribution
        on placements $\pi: L(T) \rightarrow [0,1]$ satisfying 
        the following property: for every label $l \in L(T)$ and 
        placement $\tau : L(T) \setminus \{l\} \rightarrow [0,1]$, 
        it holds that 
        $\Pr_D[\pi(l) < r \ | \ \tau] \geq r - \delta$.         
        Define a new distribution $D'$ by sampling 
        a placement $\pi: L(T) \rightarrow [0,1]$ from $D$ and then, 
        additionally, sampling $\pi(a')$ such that 
        $\Pr_{D'}[\pi(a') < r] = r - \delta$. Then 
        \begin{align*}
        \Pr_{D'}[\cut_r(T_{v \rightarrow a'})] \leq \Pr_{D}[\cut_r(T_i)]
        \end{align*}
        holds.
       \end{proposition}  
  
  		The proof of this proposition is almost the same as of 
		Proposition~\ref{prop-fresh-label}. We apply it repeatedly, 
		to each node of depth 2 or greater, noting that the distribution 
		$D$ always satisfies the requirement 
		$\Pr_D[\pi(l) < r \ | \ \tau] \geq r - \delta$, 
		by Observation~\ref{obs-pessimistic-tree}.         
		After this process, only 
	 the nodes at level 0 and 1 retain their old labels (and 
  thus their original joint distribution under $\pi$). We apply 
  Observations~\ref{observation-not-parentM} and~\ref{observation-parentM}
  to conclude that we have indeed arrived at distributions 
  $D_M$ and $D_{\bar{M}}$, respectively.
\end{proof}

\begin{proposition}
\label{prop-tree-M}
     With $\benefit(r) = \benefitSpelledOut$ as defined 
     in (\ref{def-benefit}),
     it holds that 
   $\Pr_{D_M} [\cut_r(T^{\infty})] \geq \Pr_{D_{\bar{M}}} [\cut_r(T^{\infty})] + \benefit(r)$. `
\end{proposition}
\begin{proof}
Let $T_1, \dots, T_{k-1}$ be the subtrees of $T^{\infty}$ rooted at the nodes of depth $1$.  
Let $\tau: L \setminus \{y_1, y_2\} \rightarrow [0,1]$
be a partial placement. Observe that the marginal distribution $D'$ 
of $\tau$ is the same under $D_M$ and $D_{\bar{M}}$.
We call $\tau$ {\em critical} if $\cut_r(T_1)$ and $\cut_r(T_2)$ do {\em not} happen
but $\wcut_r(T_3),\dots,\wcut_r(T_{k-1})$ do happen.
This can be determined by looking at $\tau$ alone.\\

\begin{claim}If $\tau$ is not critical then
$\Pr_{D_M}[\cut_r(T^{\infty}) \ | \ \tau] =
\Pr_{D_{\bar{M}}}[\cut_r(T^{\infty}) \ | \ \tau]$.
\end{claim}

\begin{proof}
There are several reasons why $\tau$ is not critical. First, if some 
of  $\wcut_r(T_3)$, $\dots$, $\wcut_r(T_{k-1})$ does not happen then
$\cut_r(T^\infty)$ will not happen, whatever the values of $\pi(y_1)$ and $\pi(y_2)$.
Second, suppose both of $\cut_r(T_1)$ and $\cut_r(T_2)$ happen.
Then $\wcut_r(T_i)$ happens for all $1 \leq i \leq k-1$ and $\cut_r(T^\infty)$ happens
for sure.
Third, suppose exactly one of $\cut_r(T_1)$ and $\cut_r(T_2)$ happens, say
$\cut_r(T_2)$. This means that $\cut_r(T^\infty)$ happens if and only if 
$\pi(y_1) < r$. This event has probability $r$ under both 
$D_{\bar{M}}$ and $D_M$. For $D_{\bar{M}}$ this is immediate from the definition;
for $D_M$ this follows from $D_\epsilon^{\gamma,\square}$ having uniform
marginals.
\end{proof}
If $\tau$ is critical then
\begin{align*}
    \Pr_{D_{\bar{M}}} [\cut_r(T^{\infty}) \ | \ \tau] & = \Pr_{D_{\bar{M}}}[\pi(y_1) < r \wedge \pi(y_2) < r] = r^2 \ , \\
    \Pr_{D_M}            [\cut_r(T^{\infty}) \ | \ \tau] & = \Pr_{D_M}[\pi(y_1) < r \wedge \pi(y_2) < r] = r^2 + \epsilon \gamma^2(r) \ .
\end{align*}
To bound the probability that $\tau$ is critical,
note that $\Pr_{D'} [\cut_r(T_i)] = Q_{r-\delta}$ since 
each of the atomic events $[\pi(l) < r]$ that are relevant for $\cut_r(T_i)$ 
happen with probability $r - \delta$. Furthermore, $Q_{r - \delta} \leq Q_r$. 
Next, $\Pr_{D'} [\wcut_r(T_i)] \geq P_{r - \delta}$ 
since each of the relevant atomic events happen with probability {\em at least}
$r - \delta$ (the event at the root of $T_i$, namely $[\pi(y_i) < r]$,
happens with probability exactly $r$, which is at least $r - \delta$). Therefore, 
\begin{align*}
   \Pr[\neg \cut_r(T_1)] \cdot \Pr[\neg \cut_r(T_2)] \cdot \prod_{i=3}^{k-1} \Pr[\wcut_r(T_i)]  
   \geq (1 - Q_r)^2 P_{r - \delta}^{k-3} \ , 
\end{align*}
and therefore
\begin{align*}
   \Pr_{D_{M}}[\cut_r(T^{\infty})] & =   \Pr_{D_{\bar{M}}}[\cut_r(T^{\infty})] + \epsilon \gamma^2(r) \Pr[\tau \textnormal{ is critical}] \\
   & \geq  \Pr_{D_{\bar{M}}}[\cut_r(T^{\infty})]  + \epsilon \gamma^2(r)  (1 - Q_r)^2 P_{r - \delta}^{k-3} \ .
\end{align*}
This completes the proof of Proposition~\ref{prop-tree-M}
\end{proof}

\begin{proposition}
\label{prop-tree-not-M}
Let $\damage(r) := \damageSpelledOut$ as defined in (\ref{def-damage}). Then 
it holds that 
$\Pr_{D_{\bar{M}}} [\cut_r(T^{\infty})]  \geq Q_r - \damage(r)$.
\end{proposition} 
\begin{proof}
   If $r \geq \frac{k-2}{k-1}$ then $\delta = 0$, $\damage = 0$, and $\Pr[\pi(v) < r] = r$
   for every node $v$ in $T^{\infty}$. Both sides of the inequality evaluate to $1$. \\
   
   Otherwise, let $T_1,\dots,T_{k-1}$ be the subtrees of $T^{\infty}$ rooted at the depth-1-nodes of $T^{\infty}$.
   Since $\Pr_{D_{\bar{M}}}[\pi(v) < r] = r - \delta$ for all nodes $v$ of $T^{\infty}$ of depth $2$ or greater,
   it holds that $\Pr_{D_{\bar{M}}}[ \cut_r(T_i)] = Q_{r- \delta}$ for $1 \leq i \leq k-1$. Since $Q_r$ is convex on 
   $\left[ 0, \frac{k-2}{k-1}\right]$, this is at least $Q_r - \delta Q'_r$. Next, 
   \begin{align*}
       \Pr_{D_{\bar{M}}} [\wcut_r(T_i)]  & = r \vee        \Pr_{D_{\bar{M}}} [\cut_r(T_i)] 
       \geq r \vee (Q_r - \delta Q'_r) \\
       & = r + (1-r) (Q_r - \delta Q'_r) = P_r - (1-r) \delta Q'_r \\
       & = P_r \left( 1 - \frac{(1-r) \delta Q'_r}{P_r} \right) \ ,
   \end{align*}
   and
   \begin{align*}
     \Pr_{D_{\bar{M}}} [\cut_r(T^{\infty})] & = \prod_{i=1}^{k-1}    \Pr_{D_{\bar{M}}} [\wcut_r(T_i)]  \\
     & \geq P_r^{k-1} \left( 1 - \frac{(1-r) \delta Q'_r}{P_r} \right)^{k-1} \\
     & \geq P_r^{k-1} \left(1 -  \frac{(k-1) (1-r) \delta Q'_r}{P_r} \right) \\
     & = Q_r - (k-1) (1-r) P_r^{k-2} \delta Q'_r \ .
   \end{align*}
   This completes the proof.
\end{proof}
From here on, we estimate for $x \not \in \parentM$:
\begin{align*}
    \Pr_D[\cut(T_x) \ | \ \pi(x) = r] & \geq \Pr_{D_{\bar{M}}} [\cut_r(T^{\infty})] - o(1) \tag{by Observation \ref{obs-pessimistic-tree}} \\
     & \geq Q_r - \damage(r) - o(1) \,  \tag{by Proposition~\ref{prop-tree-not-M}}  
\end{align*}
and for $x \in \parentM$:
\begin{align*}
    \Pr_D[\cut(T_x) \ | \ \pi(x) = r] & \geq \Pr_{D_{M}} [\cut_r(T^{\infty})] - o(1) \tag{by Observation \ref{obs-pessimistic-tree}} \\
     & \geq + \Pr_{D_{\bar{M}}} [\cut_r(T^{\infty})]  + \benefit(r) - o(1) \,  \tag{by Proposition~\ref{prop-tree-M}}  \\
     & \geq Q_r - \damage(r) +  \benefit(r) - o(1) \, \tag{ by Proposition~\ref{prop-tree-not-M}}  \ . 
\end{align*}
 This concludes the proof of Lemma~\ref{lemma-benefit-damage}.
\end{proof}

We obtain a lower bound on $\Pr_D[\cut(T_x)]$ by integrating the bound in Lemma~\ref{lemma-benefit-damage} over $r$:
\begin{align}
\label{prob-cut-damage-benefit}
    \Pr[\cut(T_x)] & \geq s_k - \damage + \benefit \cdot \mathbf{1}_{x \in \parentM} - o(1) \ , 
\end{align}
where $\damage = \int_0^ 1\damage(r) \,dr$ and $\benefit = \int_0^1 \benefit(r) \, dr$. To simplify the integration,
we will first give an upper bound on $\damage(r)$ and a lower bound on $\benefit(r)$. Recall our definition of 
$\gamma: [0,1] \rightarrow \mathbb{R}^+_0$
 in 
(\ref{definition-gamma}) as used in $D_\epsilon^{\gamma, \square}$:
\begin{align*}
\gamma(r) := \begin{cases}
r (\rho - r) & \textnormal{ if $r \leq \rho$}\\
0 & \textnormal{ if $r \geq \rho$.}
\end{cases}
\end{align*}

\begin{proposition}
\label{prop-damage-benefit-bounds}
   The following bounds hold:
    \begin{align}
         \damage & \leq O ( \epsilon \rho^{2k} ) \label{damage-upper-bound} \\ 
         \benefit & \geq \Omega( \epsilon \rho^{k+2} ) \ , \label{benefit-lower-bound}
    \end{align}
    where the $O$ hides factors depending solely on $k$ and terms of order $\rho^{a}$ for $a \geq 2\, k + 1$,
    and the $\Omega$ hides factors depending solely on $k$ and terms of order $\rho^b$ for $b \geq k  +2$.
\end{proposition}
\begin{proof}
     We remind the reader of the definitions of $\damage$ and $\benefit$ in (\ref{def-damage}) and (\ref{def-benefit}):
     \begin{align*}
         \damage(r) & = \damageSpelledOut \\
         \benefit(r) & = \benefitSpelledOut
     \end{align*}
     Both $\benefit(r)$ and $\damage(r)$ vanish for $r \geq \rho$. Thus, we can replace $\int_0^1$
     by $\int_0^{\rho}$. 
     We will first  bound $\benefit(r)$.
     On the interval $[0, \rho]$, $\gamma(r) = r (\rho - r)$, and $1 - Q_r \geq 1 - Q_{\rho}$,
     and $P_{r - \delta} \geq r - \delta = r (1 - \epsilon \rho (\rho - r))$. Therefore,
     \begin{align*}
        \benefit(r) & \geq \epsilon r^2 (\rho - r)^2 (1 - Q_{\rho})^2 r^{k-3} \left(1 - \epsilon \rho (\rho - r) \right)^{k-3} \\
         & \geq \frac{1}{2} \,  \epsilon r^{k-1} (\rho - r)^2  \ ,
     \end{align*}
     for sufficiently small $\epsilon$ and $\rho$ (smaller than a value depending solely on $k$).
     Integrating this over $r \in [0, \rho]$ shows (\ref{benefit-lower-bound}).\\
     
     Next, we bound $\damage(r)$ from above. 
     It holds that $r \leq P_r \leq \frac{k-1}{k-2} r$, where the first inequality follows immediately from
     $P_r = r \vee Q_r$, and the second follows from the fact that $P_r$ is convex on $\left[0, \frac{k-2}{k-1}\right]$,
     $P_0 = 0$, and $P_{\frac{k-2}{k-1}} = 1$.
     Third, we compute $Q'_r = (P_r^{k-1})' = (k-1) P_r^{k-2} P'_r \leq (k-1) \pfrac{k-1}{k-2}^{k-2} r^{k-2} P'_r$.
     To bound $P'_r$, observe again that $P'_r \leq P'_{\frac{k-2}{k-1}}$, where
     $P'_{\frac{k-2}{k-1}}$ is the {\em left derivative} since $P_r$ is not differentiable at $r = \frac{k-2}{k-1}$.
     To determine the left derivative, recall that $P = P_r $ satisfies the equation
     \begin{align*}
        P =   P^{k-1}  + (1 - P^{k-1}) r 
     \end{align*}
     and therefore
     \begin{align*}
          r = r(P) = \frac{P - P^{k-1}}{1 - P^{k-1}}  \ .
     \end{align*}
     For $P \rightarrow 1$, the derivative of  $P \mapsto r(P)$ converges 
     to $\frac{k-2}{2(k-1)}$ (compute the derivative of $r(P)$ and apply l'H\^{o}pital's rule twice; then 
     substitute $P=1$). Thus, for $r \rightarrow \frac{k-2}{k-1}$, the derivative $P'_r$ converges to the inverse
     thereof, to $\frac{2(k-1)}{k-2}$.
     Thus, $Q'_r \leq (k-1) \pfrac{k-1}{k-2}^{k-2} r^{k-2} \frac{2 (k-1)}{k-2}$.
     Putting things together and plugging in the definition of $\delta$ 
     in (\ref{def-delta}), we get
     \begin{align*}
        \damage(r) & = (k-1) (1 - r) P_r^{k-2} \delta Q'_r \\
        & \leq (k-1) (1-r) \pfrac{k-1}{k-2}^{k-2} r^{k-2} \epsilon \rho r (\rho -r) (k-1)  \pfrac{k-1}{k-2}^{k-2} 
        r^{k-2}
        \frac{2 (k-1)}{k-2} \\
        & = C_k \epsilon (1 - r)  r^{2k-3} \rho (\rho - r)
     \end{align*}
     for some constant $C_k$ depending only on $k$. Integrating over $r \in [0, \rho]$ 
     yields (\ref{damage-upper-bound}).
\end{proof}

Combining (\ref{prob-cut-damage-benefit}) with the bounds (\ref{damage-upper-bound}) 
and (\ref{benefit-lower-bound}) and summing over all $x \in V$, we obtain

\begin{align}
\label{cut-prob-sum-vars}
 \sum_{x \in V} \Pr_D[\cut(T_x)] & \geq s_k n - O( \epsilon \rho^{2k}) n + \Omega(\epsilon \rho^{k+2}) |M| 
- o(n) \ .
\end{align}

Finally, to bound the success probability of PPSZ using the distribution $D$, we need to bound
$\KL(D||U)$ from above.
By additivity of $\KL$, we see that $\KL(D||U) = |M| \cdot \KL( D^{\gamma,\square}_{\epsilon} || U^{\square})$,
where $U^{\square}$ denotes the uniform distribution on $[0,1] \times [0,1]$.
The density of $D^{\gamma,\square}_{\epsilon}$ at $r,s$ is $1 + \epsilon \phi(r) \phi(s)$, and therefore
\begin{align*}
     KL( D^{\gamma,\square}_{\epsilon} || U^2) & = 
     \frac{1}{\ln(2)} \, \int_{[0,1]^2} (1 + \epsilon \phi(r) \phi(s)) \ln ( 1 + \epsilon \phi(r) \phi(s)) \, ds \, dr \\
     & \leq 
     \frac{1}{\ln(2)} \, \int_{[0,1]^2} (1 + \epsilon \phi(r) \phi(s))  \epsilon \phi(r) \phi(s)) \, ds \, dr \\     
     & = 
     \frac{1}{\ln(2)} \, \int_{[0,1]^2} \epsilon^2 \phi^2(r) \phi^2(s) \, ds \, dr \\
     & = \frac{\epsilon^2}{\ln(2)} \left( \int_{0}^1 \phi^2(r) \, dr \right)^2 \\
     & = \frac{\epsilon^2 \rho^3}{3\, \ln(2)} \ .
\end{align*}
Thus, using (\ref{distorted-jensen}), we conclude that the success probability 
of PPSZ is $2^{ -n + s_k n + \text{gain} }$ where
\begin{align*}
\text{gain} & \geq \Omega( \epsilon \rho^{k+2}) |M| - O( \epsilon^2 \rho^3 ) |M| 
- O(\epsilon \rho^{2k}) n \ . 
\end{align*}
Choosing $\epsilon = \rho^{k-3}$, this is 
$\Omega(\rho^{2k - 1}) |M| - O( \rho^{3k - 3}) n$.
Note that $2k - 1 \leq 3k -3$ for $k \geq 3$. 
Thus, if $|M| \geq c n$, we can choose a sufficiently small $\rho$, depending on $k$ and $c$,
ensuring that $\text{gain} \geq c' n$, for some constant $c'$ depending
on $c$ and $k$.
This concludes the proof of Theorem~\ref{theorem-large-M}.

\section*{Acknowledgments}

I am grateful to Robin Moser and Timon Hertli, with whom I intensively discussed the PPSZ algorithm, and some ideas 
used in this paper came from our discussions. I am grateful to Navid Talebanfard for fruitful discussions and helpful comments on this work.
Finally, my thanks go to the anonymous reviewers, both of the conference 
version and of this version, which greatly helped me to improve 
the paper.

\printbibliography

\appendix

\section{Proofs from Section~\ref{section-cct} and~\ref{section-general-k}}
\label{appendix}

\RestateGo{\restatelempriv}
\lempriv*
\begin{proof}
  We show that there are constants $c_1, c_2, c_3 > 0$, depending only on $k$,
  such that $\Pr[\forced(x, \pi)] \geq s_k + c_i - o(1)$ whenever $x$ is privileged
  due to reason $(i)$ in Definition~\ref{definition-privileged}.
  
  The case of privileged variables of type (1), i.e., those having at least two critical clauses,
  has already been addressed in the full versions of~\cite{HKZZ}. However, for the sake
  of completeness we will also discuss this case.
  We will introduce some operations on labeled trees $T$ that never increase $\Pr[\cut_r(T)]$.
  As a most simple example, suppose $u$ is a node in $T$ and not a safe leaf; form 
  $T'$ by adding a new child $v$ to $u$. Then $\Pr[\cut_r(T)] \geq \Pr[\cut_r(T')]$, regardless
   of the label of $v$. This follows immediately from the definition of $\cut_r$.
  \begin{figure}
       \includegraphics[width=\textwidth]{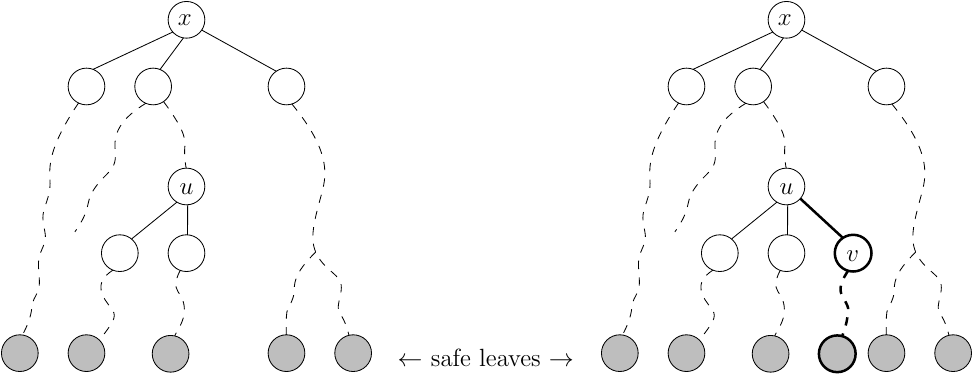}
       \caption{Attaching additional descendants to $u$ will not increase $\Pr[\cut_r(T)]$.}
  \end{figure}
  
  This operation allows us to reduce case (3) to case (2). Indeed, suppose $T_x$ has fewer 
  than $(k-1)^2$ nodes at depth $2$. Let $Y_1,\dots,Y_{k-1}$ be the children of the root of $T_x$
  and let $y_1,\dots, y_{k-1}$ be their respective labels. By assumption, some child $Y_i$ has at most 
  $k-2$ children. Create a new node $Z$, attach it as an additional child to $Y_i$, and give it label $y_1$.
  The resulting tree $T'$ is a labeled tree, every node has at most $k-1$ children, and 
  label $y_1$ occurs at depths $1$ (at $Y_1$) and at depth $2$ (at $Z$), so $T'$ is of type (2). 
  
  Next, we (almost) reduce case (1) to case (2). Suppose $x$ has two critical clauses,
  $C = (x \vee \bar{y}_1, \dots, \bar{y}_{k-1})$ and 
  $D = (x \vee \bar{z}_1, \dots, \bar{z}_{k-1})$.  
  Note that $k \leq | \{y_1, \dots, y_{k-1}, z_1, \dots, z_{k-1} \} | \leq 2(k-1)$.
  Suppose for the moment that it is less than $2(k-1)$, i.e, some variable $y_i$ also appears in $D$.
  Without loss of generality, $y_1 = z_1$. Also, the two clauses are distinct,
  so let us assume that $y_{k-1}$ does not appear in $D$.
  We will construct a (non-canonical) critical clause tree $T'_x$
  for $x$ that is of type (2). Use $C$ as clause label for the root. Note that this creates 
  $k-1$ nodes $Y_1, \dots, Y_{k-1}$ at depth~$1$ with labels $y_1,\dots,y_{k-1}$. The assignment label
  of $Y_{k-1}$ is $\alpha[y_{k-1} \mapsto 0]$, which violates $D$; here we use the fact that 
  $D$ does not contain $y_{k-1}$. Thus, we can use $D$ as clause label of node $Y_{k-1}$,
  which in turn creates $k-1$ nodes at depth 2, one of which has label $z_{1}$. Now recall
  that $y_1 = z_1$ by assumption, so this label occurs once at depth $1$ and somewhere at depth $2$,
  and $T_x$ is of type (2).
  
  Summarizing, we are left with privileged variables of type (2) and those with two critical
  clauses $C,D$ that share no variable besides $x$. Let us deal with type (2) first.
  We start with a proposition stating that assigning ``fresh labels'' to a node of $T$ cannot 
  increase $\Pr[\cut_r(T)]$. This can be seen as an alternative proof 
  of Lemma~7 in~\cite{ppsz} that bypasses the FKG inequality for monotone Boolean functions 
  (in fact, implicitly reproves it). 
  \begin{proposition}
      Let $T$ be a labeled tree and suppose the values $\{\pi(l)\}_{l \in L}$ are independent. Let $u$ be a node in $T$ 
      with label $z$. Form a new tree $T'$ by assigning $u$ a fresh label $z'$ and making
      $\pi(z')$ follow the same distribution as $\pi(z)$, but independent of everything else. Then 
      $\Pr[\cut(T)] \geq \Pr[\cut(T')]$
  \end{proposition}   
  \begin{proof}
      We show that $\Pr[\cut_r(T)] \geq \Pr[\cut_r(T')]$ for all $r \in [0,1]$. Let $\tau: L \setminus \{z, z'\} \rightarrow [0,1]$
      be a placement of all labels except $z$ and $z'$. In fact, we claim that
      $\Pr[\cut_r(T) \ | \ \tau] \geq \Pr[\cut_r(T')\ | \ \tau]$ holds for all $\tau$. Under the partial placement $\tau$,
      the event $\cut_r(T')$ becomes some monotone Boolean function $f(b,b')$ in the Boolean variables
      $b := [\pi(z) < r]$ and $b' := [\pi(z) < r']$, and $\cut_r(T)$ becomes $f(b,b)$. This holds since $T$ can be obtained
      from $T'$ by merging the labels $z$ and~$z'$. Now under our distribution on placements,
      $\pi(z)$ and $\pi(z')$ follow the same distribution and therefore
      $\Pr[b=1] = \Pr[b'=1]$. This means that $\Pr[f(b,b') = 1] = \Pr[f(b,b)=1]$ unless~$f$ depends on both variables;
      the only monotone functions depending on both $b$ and $b'$ are $b \wedge b'$ and $b \vee b'$.
      If $f(b,b') = b \wedge b'$ then $\Pr[ \cut_r(T') \ | \ \tau] = \Pr[ b \wedge b'] \leq \Pr[b] = \Pr[ \cut_r(T)]$ and our claim
      holds. Finally, $f(b,b') = b \vee b'$ cannot hold: the set of nodes in $T'$ with label $z$ or $z'$ form 
      an antichain, by Point~\ref{def-labeled-tree-no-ancestor}  of Definition~\ref{definition-labeled-tree}.
  \end{proof}  
  
   Now let $T$ be a labeled tree to which  (2) applies, i.e., some variable $y$ appears at depth $1$ and $2$
   in $T_x$. Let $Y_1, Y_2$ be those two nodes with label $y$. 
   We apply the proposition to all nodes except $Y_1$ and $Y_2$. 
   Second, we add new children to nodes of depth less than $h$ 
   until all such nodes have exactly $k-1$ children, and all the $(k-1)^h$ nodes at depth $h$ are safe leaves.
   Call this tree $T$. From the proposition and the discussion above, it follows that 
   $\Pr[\cut_r(T_x)] \geq \Pr[\cut_r(T)]$. 
   In a last step, give a fresh label $y_1$ to $Y_1$ and $y_2$ to $Y_2$, and call the resulting tree $T'$.
   In $T'$, all nodes have distinct labels, and we know what $\Pr[\cut_r(T')]$ is: 
   it is $Q_r^{(k-1)} - o(1)$.\footnote{We will write $Q_r$ instead of $Q_r^{(k-1)}$ from now on since 
   $k$ is understood.} 
   We also know that $\Pr[\cut_r(T)] \geq \Pr[\cut_r(T')]$ by the above proposition. Now, however,
    we have to take a closer look at the proof of the proposition since we want to show that
    $\Pr[\cut_r(T)]$ is significantly larger than $\Pr[\cut_r(T')]$. 
    \begin{center}
        \includegraphics[width=0.4\textwidth]{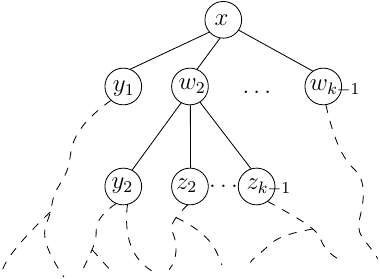}\\
        The tree $T'$: all labels are distinct.
    \end{center}
     In $T'$, we denote the node with label $w_2$ by $W_2$; that with label $z_2$ by $Z_2$, and so on.
     Since $T'$ and $T$ has the same node set, we use this notation for the nodes in $T$, too.    
     Furthermore, for a node $u$, we denote the subtree of $T$ (or $T'$) rooted at node $u$ by $T_u$ (or $T'_u$).
    As in the proof of the proposition, we fix some partial placement 
    $\tau: L \setminus \{y_1, y_2, y\} \rightarrow [0,1]$. As we have seen in the proof,
    $\Pr[\cut_r(T) \ | \ \tau] \geq \Pr[\cut_r(T') \ | \tau]$ holds for every such $\tau$.
    Call $\tau$ {\em good} if the following holds: 
    (1) $\pi(w_2) \geq r$; (2) $\pi(z_2), \dots, \pi(z_{k-1}), \pi(w_3), \dots, \pi(w_{k-1}) < r$;
    (3) $\neg \cut_r(T_{Y_1})$ and $\neg \cut_r(T_{Y_2})$.
    The events described in (1--3) are independent; those in (1) and (2) happen with probability 
    exactly $(1-r) r^{ 2k-5}$. Those in (3) happen with probability at least $(1 - Q_r)^2$.
     
    Under a good $\tau$, the $\cut_r(T')$ becomes $[\pi(y_1) < r \wedge \pi(y_2) < r]$ and has 
    probability $r^2$, and $\cut_r(T)$ becomes $[\pi(y) < r]$, which has probability $r$. Therefore,
    \begin{align*}
       \Pr[\cut_r(T)] - \Pr[\cut_r(T')] & \geq \Pr[\tau \textnormal{ is good}] \cdot (r - r^2) \\
       & \geq (1-r) r^{2k - 5} (1 - Q_r)^2 (r - r^2) 
       = (1-r)^2 r^{2k - 4} (1 - Q_r)^2 \ .
    \end{align*}
    
    Putting everything together and integrating over $r$, we conclude that
    \begin{align*}
       \Pr[\cut(T_x)] \geq s_k - o(1) + \int_0^1 (1-r)^2 r^{2k - 4} (1 - Q_r)^2 \, dr \  .
    \end{align*}
    It is clear that the integral is some positive constant depending solely on $k$.\\
    
    We are left with the case that $x$ has two critical clauses   $C = (x \vee \bar{y}_1, \dots, \bar{y}_{k-1})$ and 
    $D = (x \vee \bar{z}_1, \dots, \bar{z}_{k-1})$, and $y_i \ne z_j$ for all $1 \leq i,j \leq k-1$.
    It is clear 
  that $x$ is forced if all $y_i$ come before $x$ {\em or} all $z_i$ come before $x$. Therefore,
  \begin{align*}
     \Pr[\forced(x, \pi) \ | \ \pi(x) = r] & \geq r^{k-1} + r^{k-1} - \Pr[\textnormal{all $y_i$ and all 
     $z_i$ come before $x$}] \\
     & \geq 2\, r^{k-1} - r^{2k-2}  \ . 
  \end{align*}
  On the other hand, by focusing solely on the canonical critical clause tree of $x$,
  we can apply Lemma~\ref{lemma-cut-prob-usual} and conclude that
  \begin{align*}
     \Pr[\forced(x,\pi) \ | \ \pi(x) = r] & \geq Q_{r} - o(1)  \ ,
  \end{align*}
  (we write $Q_r$ instead of $Q_r^{(k)}$ since $k$ is understood), 
  and therefore
  \begin{align*}
     \Pr[\forced(x,\pi)=1] & \geq \int_0^1 \max ( 2\, r^{k-1} - r^{2k-2} , Q_r ) \, dr - o(1) \\
        & = s_k - o(1) + \int_0^1 \max\left(0, 2\, r^{k-1} - r^{2k-2}- Q_r \right) \, dr \ .
  \end{align*}
  It remains to show that the latter term is positive for a substantial range of $r \in [0,1]$.
  We claim that if  $r$ is  sufficiently small, $Q_r$ is only a tiny factor larger than $r^{k-1}$. Indeed,
  From Proposition~\ref{prop-QR-convex}, we know that $Q_r \leq e\, r^{k-1}$, 
  thus $P_r = r \vee Q_r = r + (1-r) e\, r^{k-1} = r (1 + e (1-r) r^{k-2} )$
   and in turn
  $Q_r = P_r^{k-1} \leq \left(r + e\, r^{k-1}\right)^{k-1}
  = r^{k-1} \left(1 + e\, r^{k-2} \right)^{k-1} < r^{k-1} e^{e (k-1) r^{k-2}}$.
  We check that $e^{e (k-1) r^{k-2}}\leq 1.5$ for all $k \geq 3$ and $r \leq 1/16$. Therefore,
  \begin{align*}
   \int_0^1 \max\left(0, 2\, r^{k-1} - r^{2k-2}- Q_r \right) \, dr &  \leq 
   \int_0^{1/16} \left(2\, r^{k-1} - r^{2k-2} - 1.5 \, r^{k-1}\right) \, dr \\
   & = 
   \int_0^{1/16} \left(\frac{1}{2} \, r^{k-1} - r^{2k-2} \right) \, dr \\
  \end{align*}
  and the latter is some positive constant $c_1$ depending only on $k$.
  This concludes the proof of Lemma~\ref{lemma-privileged}.
\end{proof}

\end{document}